\def\eqref#1{equation~\ref{#1}}
\def\1{\bm{1}}
\DeclareMathAlphabet{\mathsfit}{\encodingdefault}{\sfdefault}{m}{sl}
\SetMathAlphabet{\mathsfit}{bold}{\encodingdefault}{\sfdefault}{bx}{n}
\newcommand{\cmark}{\checkmark}
\newcommand{\xmark}{\ding{55}}   
\definecolor{crimson}{rgb}{0.86, 0.08, 0.24}
\newcommand{\highlight}[1]{\textcolor{crimson}{\textbf{#1}}}
\newcommand{\gain}[1]{\scriptsize\textcolor{gray}{(#1)}}
\newcommand{\ubold}[1]{\underline{\textbf{#1}}}
\newtheorem{theorem}{Theorem}
\newtheorem{lemma}{Lemma}
\newtheorem{fact}{Fact}
\newcommand\DoToC{%
  \startcontents
  \printcontents{}{1}{\textbf{Contents of Appendix}\vskip3pt\hrule\vskip5pt}
  \vskip3pt\hrule\vskip5pt
}
\title{Rec-R1: Bridging Generative Large Language Models and User-Centric Recommendation Systems via Reinforcement Learning}
\author{\name Jiacheng Lin \email jl254@illinois.edu \\
      \addr University of Illinois Urbana-Champaign
      \AND
      \name Tian Wang \email Shtimwang@gmail.com \\
      \addr Independent Researcher
      \AND
      \name Kun Qian \email kunqian699@gmail.com\\
      \addr Independent Researcher
      }
\def\ourmethod{\textsc{Rec-R1}\xspace}
\begin{document}

\maketitle

\begin{abstract}
We propose \ourmethod, a general reinforcement learning framework that bridges large language models (LLMs) with recommendation systems through closed-loop optimization. Unlike prompting and supervised fine-tuning (SFT), \ourmethod directly optimizes LLM generation using feedback from a fixed, black-box recommendation model—without relying on synthetic SFT data from proprietary models like GPT-4o. This avoids the substantial cost and effort required for data distillation. To verify the effectiveness of \ourmethod, we evaluate \ourmethod on three representative tasks: product search, sequential recommendation, and product re-ranking. Experimental results demonstrate that \ourmethod not only consistently outperforms prompting- and SFT-based methods, but also achieves remarkable gains over strong discriminative baselines, even when used with simple retrievers like BM25. More impressively, \ourmethod preserves the general-purpose capabilities of the LLM, in contrast to SFT, which often impairs instruction-following and reasoning. These findings suggest \ourmethod as a promising foundation for continual task-specific adaptation without catastrophic forgetting. Data and code are available at \url{https://github.com/linjc16/Rec-R1}.
\end{abstract}

\section{Introduction}
Recommendation systems (RecSys) have become essential components in various real-world applications, from e-commerce \citep{schafer1999recommender, valencia2024artificial} and video platforms \citep{lubos2023overview, covington2016deep} to news delivery \citep{raza2022news, wu2023personalized} and social media \citep{campana2017recommender}. Despite the remarkable progress of RecSys over the decades, modern systems still face fundamental limitations. Most notably, they lack open-domain world knowledge and struggle to understand users’ underlying motivations and preferences \citep{lin2025can}. These shortcomings often lead to suboptimal recommendation performance, especially in complex scenarios such as when user intent is implicit or expressed in natural language \citep{hou2024bridging, he2023large}.

Recent advances in generative large language models (LLMs) have opened new possibilities for enhancing recommendation systems \citep{xi2024towards,bao2023tallrec}.  Trained on massive web-scale corpora, LLMs possess extensive open-world knowledge, alongside strong capabilities in natural language understanding and reasoning \citep{achiam2023gpt,grattafiori2024llama,yang2024qwen2,lin2024panacea}. These strengths make LLMs particularly suitable for {user-centric} recommendation scenarios, {where the key challenge is to understand language-driven inputs, capture implicit user intent, and align recommendations with user needs.} As a result, recent studies have explored using LLMs in various stages of the recommendation pipeline, including query rewriting \citep{peng2024large, li2023agent4ranking}, user intent summarization \citep{torbati2023recommendations}, and so on, to improving the performance of downstream recommendation tasks such as retrieval and ranking. These methods typically employ either zero- or few-shot prompting \citep{xi2024towards, lyu2023llm,ren2024representation, li2023agent4ranking} or supervised fine-tuning (SFT) \citep{luo2024integrating, li2023gpt4rec,yang2023palr,liao2024llara, ji2024genrec} to adapt LLMs to recommendation tasks. {Figure~\ref{fig:taxonomy}} illustrates the major paradigms where LLMs are applied in RecSys.

\begin{figure}[t]
    \centering
    \includegraphics[width=\linewidth]{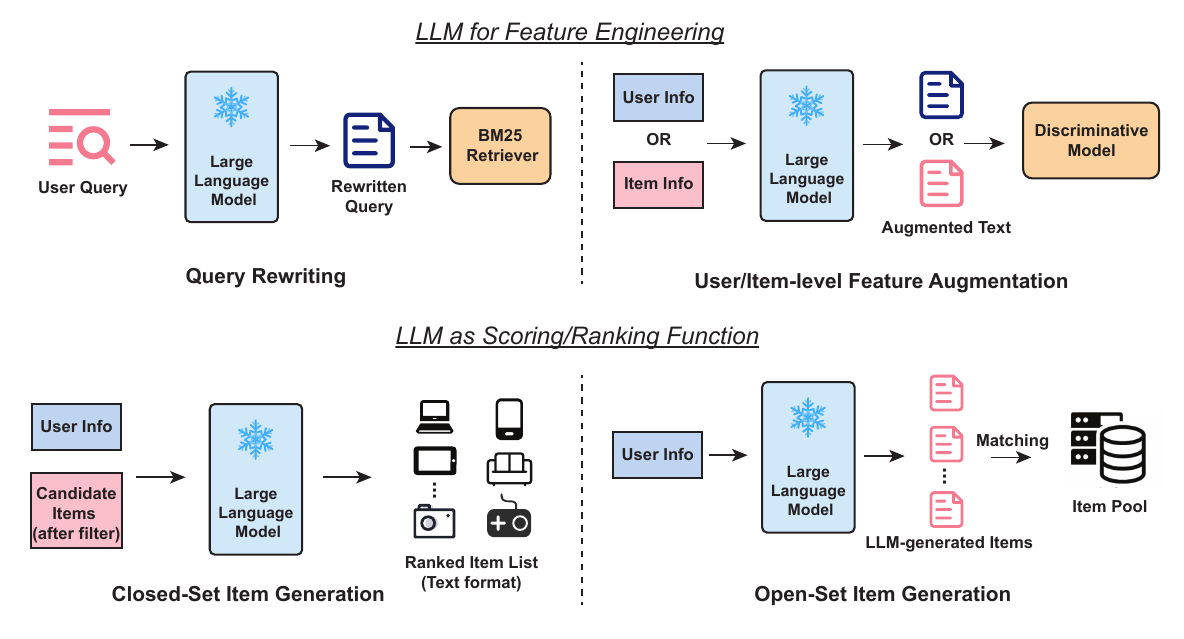}
    \caption{Illustration of how generative LLMs are applied in recommender systems (LLM4Rec), following the taxonomy in \cite{lin2025can}. The upper row shows the use of LLMs for feature engineering, including (1) Query Rewriting, where the LLM reformulates the input query to improve retrieval, and (2) User/Item-level Feature Augmentation, where the LLM encodes user or item information into richer textual representations as input to a downstream model. The lower row demonstrates the use of LLMs as Scoring/Ranking Functions, including (3) Closed-Set Item Generation, where the LLM ranks a given candidate list, and (4) Open-Set Item Generation, where the LLM directly generate candidate items and matches them to a product pool. Note that this figure primarily reflects the inference-time setting—thus all LLMs are frozen. \textbf{Our proposed \ourmethod is compatible with all paradigms shown here} (see Appendix \ref{app:paradigms}).}
    \label{fig:taxonomy}
    \vspace{-1em}
\end{figure}

However, most existing approaches still treat LLMs and recommendation models as disjoint components, with \textbf{no closed feedback loop between LLM generation and recommendation performance} \citep{peng2024large,zheng2023generative,luo2024integrating, li2023gpt4rec,yang2023palr,liao2024llara}. As a result, LLMs are typically optimized using proxy objectives rather than being directly trained using feedback from RecSys, which is often inconsistent with the ultimate goal of improving recommendation quality. Moreover, constructing high-quality supervision data for intermediate tasks—such as query rewriting—typically requires either human annotation, LLM APIs (e.g., GPT-4), or mining from historical interaction logs \citep{peng2024large,hou2024bridging}. Nonetheless, these sources rarely produce data that is truly aligned with optimal recommendation performance. Worse still, generating such data at scale is both time-consuming and financially expensive, especially when relying on human annotation or commercial LLMs. Figure \ref{fig:pof} presents our proof-of-concept comparison that highlights the limitations of SFT in terms of both effectiveness and overhead.


\begin{figure}[t]
    \centering
    \includegraphics[width=0.8\linewidth]{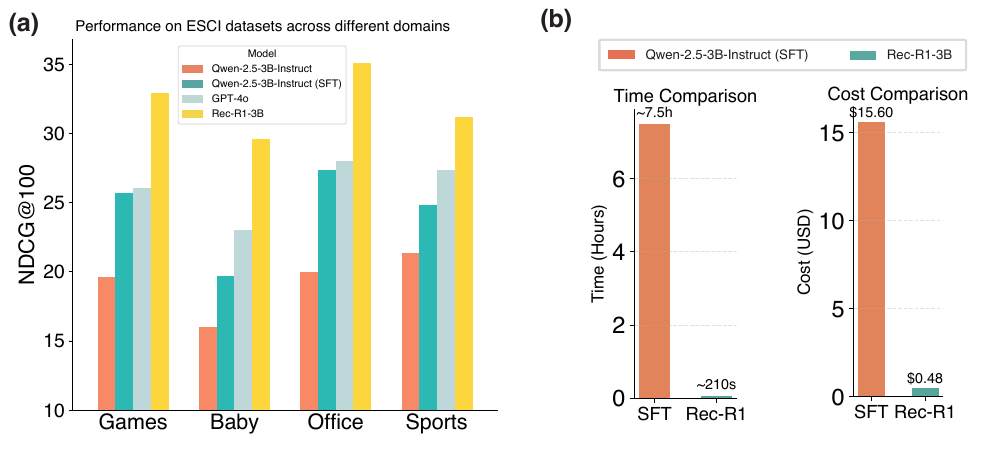}
    \caption{\textbf{Proof-of-concept comparison under a small-scale setup, illustrating the limitations of SFT based on GPT-4o-generated data.} \textbf{(a)} Performance on the ESCI dataset. The SFT baseline fine-tunes using data generated by GPT-4o, and its performance is inherently upper-bounded by the performance of GPT-4o itself. \textbf{(b)} Comparison of training time and cost. The total time for SFT includes both the data generation phase using GPT-4o and subsequent model fine-tuning. \ourmethod requires no additional data generation, and we report the minimal training time and cost required to match the performance of the SFT and GPT-4o model. See Appendix~\ref{app:pof_estimate} for cost estimation details.}
    \vspace{-1em}
    \label{fig:pof}
\end{figure}

To address these challenges, we propose \ourmethod, a general framework that leverages reinforcement learning (RL) to bridge generative LLMs and downstream black-box RecSys through closed-loop optimization. In contrast to existing approaches that rely on static data and proxy supervision, \ourmethod enables LLMs to learn directly from recommendation feedback—such as retrieval or ranking metrics—thus aligning the generation process with the ultimate goal of improving recommendation quality. Specifically, given any recommendation-relevant input, such as a user query or behavioral history, an LLM generates a textual output that is consumed by a downstream recommendation model. This textual output varies in form depending on the task, such as a rewritten query, a synthesized user profile, or an enriched textual description of an item. The recommendation system then evaluates the quality of the LLM-generated text using rule-based performance metrics (e.g., NDCG, Recall), which are transformed into reward signals for optimizing the LLM via RL. Through repeated interaction with the recommendation system, the LLM gradually learns to generate inputs that are better aligned with the system’s objectives, thereby improving recommendation performance without relying on suboptimal intermediate supervision. 

\ourmethod is model-agnostic and task-flexible: it can be integrated with a wide range of recommendation architectures—including sparse retrievers (e.g., BM25), dense discriminative models, and hybrid pipelines—without requiring any modifications to their internal structures. It also supports diverse generation tasks as long as the generated text can be consumed by the downstream recommendation system. Moreover, since \ourmethod relies solely on black-box feedback in the form of recommendation performance metrics, it does not require access to model gradients or internal parameters, making it easy to deploy on top of existing production systems. It also eliminates the need for constructing SFT data, allowing the generative model to be optimized directly through interactions.

We evaluate \ourmethod in three representative recommendation scenarios—product search, sequential recommendation, and product re-ranking—to demonstrate its effectiveness, though the framework itself is broadly applicable to a wider range of recommendation tasks (see Appendix~\ref{app:paradigms}). In product search, we observe that applying the \ourmethod framework significantly improves overall recommendation performance, achieving state-of-the-art results on the evaluated benchmarks. In the sequential recommendation setting, \ourmethod leads to consistent gains. More importantly, it shows strong performance in cold-start scenarios, where user profile information is absent, outperforming widely used sequential baselines. In product re-ranking, \ourmethod also shows clear superiority over both strong cross-encoder and LLM-based reranker baselines. Beyond performance gains, we further investigate whether \ourmethod preserves the general capabilities of the underlying language model. On the IFEval benchmark \citep{zhou2023instruction}, which measures instruction following capabilities, \ourmethod maintains or even improves performance, while SFT causes a drop of over 27 points. This suggests that \ourmethod enables task-specific adaptation without compromising the general-purpose capabilities of the initialized LLM. We summarize our main contributions as follows:
\begin{itemize}
    \item We propose \ourmethod, a general reinforcement learning framework that bridges generative LLMs and recommendation systems through reward-driven optimization. Our approach is model-agnostic with respect to the recommendation system and supports diverse tasks.
    \item We conduct extensive experiments on three representative recommendation tasks, i.e., product search, sequential recommendation and product re-ranking, demonstrating that \ourmethod significantly improves performance across different recommendation architectures. Notably, in product search, \ourmethod improves the NDCG@100 score by up to 21.45 points for BM25-based retrievers, and by up to 18.76 points for dense discriminative models, compared to their respective baselines.
    \item  \ourmethod preserves the general-purpose capabilities of the initialized LLM while achieving strong task-specific performance, outperforming supervised fine-tuning in both recommendation effectiveness and instruction-following generalization.
\end{itemize}

\begin{figure}
    \centering
    \includegraphics[width=\linewidth]{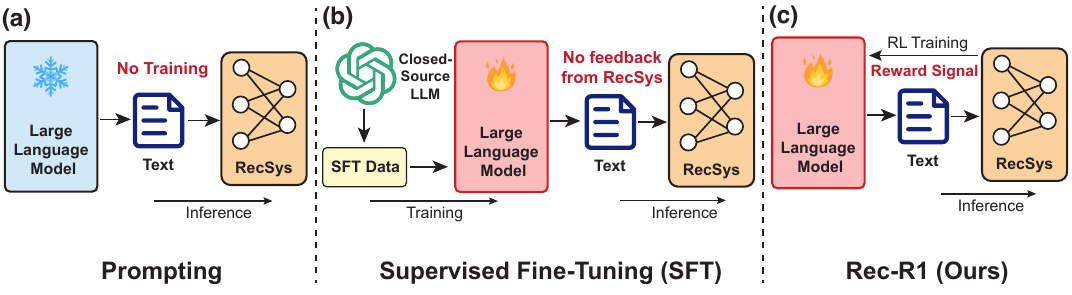}
    \caption{\textbf{Comparison of three paradigms for using LLMs in recommendation systems.}
\textbf{(a)} Prompting uses a frozen LLM to generate textual inputs for the recommendation system, without any model updates.
\textbf{(b)} SFT trains the LLM to imitate outputs generated by a stronger model (e.g., GPT-4o), but the training process does not involve any RecSys feedback.
\textbf{(c)} \ourmethod introduces a closed-loop RL framework, where the LLM is optimized directly using reward signals from the recommendation system, without requiring external annotation or data distillation. Unlike SFT, which relies on labeled intermediate outputs (e.g., rewritten queries) from closed-source models, \ourmethod operates directly on the same data and learns from the recommendation performance.}
    \label{fig:overview}
    \vspace{-1em}
\end{figure}

\section{\ourmethod}
\subsection{Problem Formulation}
We begin by modeling how LLMs are integrated into RecSys. In this general setup, the LLM receives an input $s\in \mathcal{S}$, which may represent a user query, behavioral history, or contextual information. The LLM then generates a textual output $a\in \mathcal{A}$, such as a rewritten query, an enriched item description, or a synthesized user profile. This output is consumed by a downstream recommendation model, which produces a performance-based evaluation $f(a|s)\in \mathbb{R}$, such as NDCG, Recall, or any task-specific metric. 

The behavior of the LLM is governed by a conditional generation policy $\pi_\theta(a|s)$ where $\theta$ denotes the parameters of the generative LLM. The objective is to find a policy that maximizes expected recommendation performance:
\begin{align}
    \max_{\theta} \mathbb{E}_{s\sim p(s), a\sim\pi_\theta(a|s)}[f(a|s)]
\end{align}
Here, $p(s)$ denotes the empirical distribution over recommendation-relevant inputs provided to the LLM.
\subsection{Theoretical Analysis of Existing Paradigms' Limitations}
While the goal in recommendation-oriented LLM usage is to maximize downstream performance $\mathbb{E}[f(a|s)]$, existing paradigms fail to optimize this objective directly.

Prompting-based methods, including zero-shot and few-shot prompting, treat the LLM as a frozen generator. These approaches rely on manually constructed prompts or few examples to elicit desirable outputs. However, since the model parameters $\theta$ are not updated, the policy $\pi_\theta(a|s)$ remains fixed and cannot adapt to task-specific feedback, resulting in suboptimal recommendation performance.

Supervised fine-tuning forces the LLM to imitate outputs generated by a stronger model, such as GPT-4o. Formally, this corresponds to maximizing the log-likelihood of actions sampled from the data-generating policy $\pi_g$ under the learned policy $\pi_\theta$:
\begin{align}
    \label{eq:sft} \max_\theta \mathbb{E}_{s \sim p(s), a \sim \pi_g(a|s)}[\log \pi_\theta(a|s)]
\end{align}
This maximum likelihood estimation (MLE) objective encourages the learned policy $\pi_\theta$ to imitate $\pi_g$, but does not consider the downstream performance $f(a|s)$ during optimization. We now show that this training procedure imposes a fundamental performance ceiling, which we formalize in the following fact.

\begin{fact}[SFT Converges Toward the Data-Generating Policy]
Let $\pi_g(a|s)$ be a fixed data-generating policy, and consider the supervised fine-tuning (SFT) objective:
\begin{align}
\pi_{\theta^*} = \arg\max_\theta \mathbb{E}_{s \sim p(s), a \sim \pi_g(a|s)}[\log \pi_\theta(a|s)].
\end{align}
Assume:
\begin{enumerate}
    \item[(i)] \textbf{(Sufficient Expressivity)} The policy class $\{\pi_\theta(\cdot|s)\}$ is expressive enough to closely approximate the data-generating policy $\pi_g(\cdot|s)$, i.e., $
    \inf_\theta \mathbb{E}_{s \sim p(s)}[D_{\mathrm{KL}}(\pi_g(\cdot|s)\|\pi_\theta(\cdot|s))] = 0.
    $
    \item[(ii)] \textbf{(Optimization Convergence)} The optimization process converges to a global maximum of the MLE objective.
    \item[(iii)] \textbf{(Data Sufficiency)} Data-generating policy generates a sufficiently large amount of training samples, so the empirical distribution $\hat{p}(s,a)$ accurately approximates the true distribution $p(s)\pi_g(a|s)$, i.e.,
    \[
    \hat{p}(s,a) \xrightarrow{\text{a.s.}} p(s)\pi_g(a|s)\quad\text{as}\quad N\to\infty,
    \]
    where $N$ is the number of training samples generated.
\end{enumerate}

Then the optimal policy $\pi_{\theta^*}$ is the one that minimizes the KL divergence to the data-generating policy $\pi_g$:
\begin{align}
\pi_{\theta^*} = \arg\min_\theta \mathbb{E}_{s \sim p(s)} \left[ D_{\mathrm{KL}}(\pi_g(\cdot|s)\|\pi_\theta(\cdot|s)) \right].
\end{align}
\label{theorem:sft}
\vspace{-1em}
\end{fact}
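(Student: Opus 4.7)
The plan is to establish the equivalence between the maximum-likelihood objective and the stated KL-divergence minimization by exploiting the standard identity
\[
D_{\mathrm{KL}}(\pi_g(\cdot|s)\,\|\,\pi_\theta(\cdot|s)) \;=\; -H(\pi_g(\cdot|s)) \;-\; \mathbb{E}_{a\sim\pi_g(\cdot|s)}[\log\pi_\theta(a|s)],
\]
where $H(\pi_g(\cdot|s))$ is the entropy of the data-generating policy and does not depend on $\theta$. Taking an outer expectation over $s\sim p(s)$ on both sides, the $\theta$-independent term separates cleanly from the $\theta$-dependent part, which is precisely the population-level log-likelihood.

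First, I would invoke assumption (iii) to replace the empirical SFT loss with its population limit: by the strong law of large numbers applied to $\log\pi_\theta(\ra|\rs)$ under the joint distribution $p(s)\pi_g(a|s)$, the empirical objective converges almost surely, as $N\to\infty$, to $\mathbb{E}_{s\sim p(s),\,a\sim\pi_g(a|s)}[\log\pi_\theta(a|s)]$. Second, I would substitute the identity above to rewrite
\[
\mathbb{E}_{s\sim p(s)}\!\left[D_{\mathrm{KL}}(\pi_g(\cdot|s)\|\pi_\theta(\cdot|s))\right] \;=\; C \;-\; \mathbb{E}_{s\sim p(s),\,a\sim\pi_g(a|s)}[\log\pi_\theta(a|s)],
\]
where $C=-\mathbb{E}_{s\sim p(s)}[H(\pi_g(\cdot|s))]$ is a constant in $\theta$. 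Since the two objectives differ only by an additive constant and a sign, the $\arg\max$ of the MLE objective equals the $\arg\min$ of the expected KL, and by assumption (ii) the optimization actually attains this extremum, yielding the claimed identification of $\pi_{\theta^*}$.

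Assumption (i) plays a supporting role: it guarantees that the minimum of the KL divergence is not only well-defined but attained at value zero, so the resulting policy $\pi_{\theta^*}$ is forced to match $\pi_g$ in the KL sense, which in turn upper-bounds any downstream performance $\mathbb{E}[f(a|s)]$ achievable via SFT by that of $\pi_g$ itself — this is the punchline the authors want in the surrounding text about an SFT ``ceiling.''

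The main obstacle is not analytical but conceptual: the $\arg\max$ over $\theta$ need not be unique (different parameter settings can induce the same policy), so strictly speaking the conclusion is about the induced conditional distribution $\pi_{\theta^*}(\cdot|s)$ rather than a unique parameter vector. I would address this by stating the conclusion as an equality of induced policies up to the equivalence class of minimizers, and by noting that assumptions (i)–(iii) together force all such minimizers to coincide $p(s)$-almost surely with $\pi_g(\cdot|s)$, so the identification $\pi_{\theta^*}=\arg\min_\theta \mathbb{E}_s[D_{\mathrm{KL}}(\pi_g(\cdot|s)\|\pi_\theta(\cdot|s))]$ holds in the only sense that matters for the downstream recommendation objective $f(a|s)$.
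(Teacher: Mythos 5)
Your proposal is correct and takes essentially the same route as the paper: the entropy--KL decomposition of the negative log-likelihood (the paper's Lemma~1), almost-sure convergence of the empirical objective to its population limit under assumption~(iii), and identification of the MLE $\arg\max$ with the $\arg\min$ of the expected KL under assumptions~(i) and~(ii). Your added caveat about non-uniqueness of the parameter-level maximizer is a refinement the paper's proof does not address, but it does not alter the substance of the argument.
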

The proof can be found in Appendix \ref{app:proof_of_theo1}. Fact~\ref{theorem:sft} reveals a fundamental limitation of supervised fine-tuning: the learned policy $\pi_{\theta^*}$ is inherently constrained to imitate the data-generating policy $\pi_g$. Consequently, the recommendation performance of an SFT-trained model can at best approach—but never exceed—the performance of the policy used to generate the training data (e.g., GPT-4o). This fact is empirically supported by our experiments on the ESCI dataset, as illustrated in Figure \ref{fig:pof}(a), where the performance of the SFT-trained model closely matches but does not surpass GPT-4o. However, as GPT-4o itself is not explicitly optimized for the downstream recommendation task, its performance is typically suboptimal.

\subsection{The \ourmethod Framework}

To overcome the limitations of prompting and SFT, we introduce \ourmethod, a general framework that bridges generative LLMs and recommendation systems through reinforcement learning. Rather than imitating a static data-generating policy, \ourmethod directly optimizes the LLM policy $\pi_\theta$ based on feedback from the downstream recommender—thereby aligning the generation process with the true objective: maximizing recommendation performance. See Figure~\ref{fig:overview} for a comparison of these paradigms.

At its core, \ourmethod casts the LLM-RecSys interaction as a closed-loop optimization process, where the LLM produces text (e.g., rewritten queries, user profiles, or item descriptions), and the recommendation model evaluates the result using task-specific metrics. These evaluation scores are then transformed into scalar reward signals for policy optimization via reinforcement learning.

A key strength of \ourmethod lies in its ability to optimize the LLM using direct feedback from the recommendation system. This feedback is formalized as a scalar reward $r=f(a|s)\in \mathbb{R}$, which quantifies how well the LLM-generated output $a$ performs in the downstream task given input $s$. The reward can be instantiated using any differentiable or non-differentiable metric that reflects recommendation quality, such as NDCG@K and Recall@K. Formally, the optimization objective is to find a generation policy $\pi_\theta(a|s)$ that maximizes the expected reward:
\begin{align}
    \label{eq:rec-r1} \max_\theta \mathbb{E}_{s \sim p(s),\, a \sim \pi_\theta(a \mid s)} \left[ f(a|s) \right].
\end{align}

Unlike SFT, this objective does not rely on manually labeled supervision or imitation of a fixed policy. Instead, it allows the model to continuously adapt its behavior to maximize performance on the downstream recommendation tasks. {From the perspective of the optimization objective, Eq.~\ref{eq:sft} minimizes a proxy loss that indirectly relates to recommendation quality, whereas Eq.~\ref{eq:rec-r1} directly optimizes the true recommendation objective, ensuring alignment with the downstream evaluation metrics.} Following DeepSeek-R1 \citep{guo2025deepseek}, we adopt Group Relative Policy Optimization (GRPO) \citep{shao2024deepseekmath} to optimize the LLM policy. Compared to traditional algorithms such as PPO \citep{schulman2017proximal}, GRPO significantly reduces memory consumption during training while maintaining competitive performance. Further, we use rule-based reward functions derived from standard evaluation metrics (e.g., NDCG, Recall) rather than training a separate reward model, which helps mitigate reward hacking and avoids introducing additional biases.
\section{Experiments}
To validate the effectiveness of \ourmethod, we conduct experiments on three representative recommendation scenarios: {product search} (\S \ref{sec:prod_search}), {sequential recommendation} (\S \ref{sec:seq_rec}), and product re-ranking (\S \ref{subsec:rerank}). We also perform detailed analyses to examine the generalization ability after training (\S \ref{subsec:forget}). More discussions and analysis can be found in Appendix~\ref{app:analysis} and case study in Appendix \ref{app:case_study}.
\subsection{Product Search}
\label{sec:prod_search}
\subsubsection{Experimental Setup}
\textbf{Task Definition.} In the product search task, the user provides a natural language query $s \in \mathcal{S}$, which expresses an information need (e.g., ``a waterproof camera for hiking''). The goal of the recommendation system is to retrieve a ranked list of items that best match this query. To improve retrieval quality, the LLM generates a textual transformation $a \in \mathcal{A}$—such as a rewritten or clarified version of the query—which is then fed into a downstream retriever.

The retriever returns a ranked list of candidate items based on the textual input $a$, and the system evaluates performance using a relevance dictionary $\mathcal{D}$ that maps each original input $s$ to its corresponding ground-truth item list. The reward score $f(a|s) \in \mathbb{R}$ is computed by comparing the retrieved list (from $a$) against the target set $\mathcal{D}(s)$ associated with the original query $s$. We use NDCG@100 as the evaluation metric, which captures both relevance and ranking position. This reward function serves as the feedback signal in \ourmethod to optimize the LLM's generation policy $\pi_\theta(a|s)$. While $\mathcal{D}$ is derived from the original datasets in our experiments, in real-world deployments it can be constructed from various sources such as most recent user interaction logs or click-through data, making \ourmethod promising to production-scale recommendation systems.

\textbf{Datasets.} We consider two distinct settings for product search: (1) \textbf{conventional product search}, where the input query is a short phrase or keyword-based expression (e.g., ``noise-canceling headphones''), and (2) \textbf{complex product search}, where the input is a rich and long natural language context, often involving implicit preferences or use-case scenarios. To evaluate these two settings, we adopt two datasets: the \textbf{ESCI} dataset for conventional product search \citep{reddy2022shopping} and the \textbf{Amazon-C4} dataset for complex product search \citep{hou2024bridging}. More details can be found in Appendix \ref{app:prod_search_dataset}. For all experiments in this paper, we report test performance based on the checkpoint that achieves the best validation score.

\textbf{Baselines.} We compare \ourmethod against a range of baselines. For sparse retrieval, we use the BM25, as well as prompting-enhanced variants where a frozen LLM (GPT-4o or Qwen-2.5-3B-Instruct \citep{yang2024qwen2}) rewrites the input query before retrieval. For dense retrieval, we include discriminative models such as RoBERTa \citep{liu2019roberta}, SimCSE \citep{gao2021simcse}, and BLAIR \citep{hou2024bridging}, with and without prompting-based query rewriting. In contrast, \ourmethod starts from the Qwen-2.5-3B-Instruct model and is trained via RL to generate rewritten queries that directly optimize retrieval performance.

\begin{table}[t]
    \centering
    \caption{\textbf{Performance comparison of different methods on conventional product search \textbf{(ESCI)} tasks.} We report the NDCG@100 scores. The best performance score is denoted in \highlight{bold}, with the second and third best \ubold{underlined}. The numbers in \textcolor{gray}{gray} indicate the absolute improvement of \ourmethod over their corresponding base retrievers (BM25 or BLAIR).}
    \resizebox{0.9\linewidth}{!}{
    \begin{tabular}{lcccc}
        \toprule
        Model & Video Games & Baby Products & Office Products & Sports and Outdoors \\
        \midrule
        \textbf{Sparse retrieval baselines}\\
        BM25 & 12.44 & 15.12 & 23.96 & 19.48 \\
        GPT-4o$_\text{+BM25}$ & 26.06 & 23.05 & 27.98 & 27.38\\
        Qwen-2.5-3B-Instruct$_\text{+BM25}$ & 19.63 & 16.03 & 19.96 & 21.36\\
        \midrule
        \textbf{Dense retrieval baselines} \\
        RoBERTa$_{\text{\textsc{BASE}}}$ & 0.16 & 0.00 & 0.00 & 0.17  \\
        SimCSE$_{\text{\textsc{BASE}}}$ & 2.21 & 5.68 & 8.58 & 8.03\\
        BLAIR$_{\text{\textsc{BASE}}}$ & 9.75 & 15.20 & 17.19 & 17.08 \\ 
        \hdashline[2pt/2pt] \\ [-9pt]
        RoBERTa$_{\text{\textsc{LARGE}}}$ & 0.00 & 0.00 & 0.00 & 0.00  \\
        SimCSE$_{\text{\textsc{LARGE}}}$ & 6.59 & 9.71 & 13.63 & 11.90 \\
        BLAIR$_{\text{\textsc{LARGE}}}$ & 15.88 & 15.96 & 21.17 & 18.30\\ 
        \hdashline[2pt/2pt] \\ [-9pt]
        GPT-4o$_\text{+BLAIR-BASE}$ & 20.13  & 24.57 &  21.83 & 22.97\\
        GPT-4o$_\text{+BLAIR-LARGE}$ & 23.99 & 24.10 & 22.99 & 24.67\\
        Qwen-2.5-3B-Instruct$_\text{+BLAIR-BASE}$ & 10.56 & 16.23 & 13.61 & 17.09\\
        Qwen-2.5-3B-Instruct$_\text{+BLAIR-LARGE}$ & 17.34 & 16.10 & 17.29 & 17.74 \\
        
        \midrule
        \textbf{Ours}\\
        $\text{\ourmethod-3B}_\text{+BM25}$ & \highlight{33.89} & \highlight{29.27} & \highlight{34.61} & \ubold{31.92}\\
        & \gain{+21.45} & \gain{+14.15}  & \gain{+10.65}  & \gain{+12.44}\\
        $\text{\ourmethod-3B}_\text{+BLAIR-BASE}$ & \ubold{28.51} & \ubold{29.24} & \ubold{33.98} & \ubold{30.71}\\
        & \gain{+18.76} & \gain{+14.04}  & \gain{+16.79}  & \gain{+13.63}\\
        $\text{\ourmethod-3B}_\text{+BLAIR-LARGE}$ & \ubold{31.41} & \ubold{28.76} & \ubold{34.12} & 
        \highlight{32.49}\\
        & \gain{+15.53} & \gain{+12.80}  & \gain{+12.95}  & \gain{+14.19}\\
        \bottomrule
    \end{tabular}
    }
    \label{tab:esci-results}
    \vspace{-1em}
\end{table}

\subsubsection{Results}
\begin{table}[t]
    \centering
    \caption{\textbf{Performance comparison of different methods on complex product search \textbf{(Amazon-C4)} tasks.} We report the NDCG@100 scores. The best performance score is denoted in \highlight{bold}, with the second and third best \ubold{underlined}. The numbers in \textcolor{gray}{gray} indicate the improvement of \ourmethod over their corresponding base retrievers (BM25 or BLAIR).}
    \resizebox{0.9\linewidth}{!}{
    \begin{tabular}{lcccc}
        \toprule
        Model & Video Games & Baby Products & Office Products & Sports and Outdoors \\
        \midrule
        \textbf{Sparse retrieval baselines}\\
        BM25 & 7.82 & 6.39 & 8.70 & 7.28\\
        GPT-4o$_\text{+BM25}$ & 13.02 & 12.45 & 12.64 & 11.59\\
        Qwen-2.5-3B-Instruct$_\text{+BM25}$ & 10.33 & 8.62 & 9.61 & 9.05 \\
        \midrule
        \textbf{Dense retrieval baselines} \\
        RoBERTa$_{\text{\textsc{BASE}}}$ & 0.22 & 0.12 & 0.33 &  0.13 \\
        SimCSE$_{\text{\textsc{BASE}}}$  & 6.05 & 6.22 & 3.71 & 4.33 \\
        BLAIR$_{\text{\textsc{BASE}}}$ & 19.14 & 19.53 & 17.43 & 20.02 \\ 
        \hdashline[2pt/2pt] \\ [-9pt]
        RoBERTa$_{\text{\textsc{LARGE}}}$ & 0.00 & 0.06 & 0.00 & 0.00 \\
        SimCSE$_{\text{\textsc{LARGE}}}$  & 5.03 & 7.43 & 5.41 & 8.12 \\
        BLAIR$_{\text{\textsc{LARGE}}}$ &  \ubold{24.86} & \ubold{22.44} & 18.92 & \ubold{24.54} \\ 
        \hdashline[2pt/2pt] \\ [-9pt]
        GPT-4o$_\text{+BLAIR-BASE}$ & 21.12 & 19.54 & 17.22 & {22.68}\\
        GPT-4o$_\text{+BLAIR-LARGE}$ & \ubold{23.40} & 21.00 & 18.69 & \ubold{25.74}\\
        Qwen-2.5-3B-Instruct$_\text{+BLAIR-BASE}$ & 15.82 & 18.07 & 14.34 & 16.96\\
        Qwen-2.5-3B-Instruct$_\text{+BLAIR-LARGE}$ & 18.20 & 19.19 & 15.37 & 18.94 \\
        \midrule
        \textbf{Ours}\\
        $\text{\ourmethod-3B}_\text{+BM25}$ & 18.91 & 20.55 & \ubold{19.24} & 20.06\\
        & \gain{+11.09} & \gain{+14.16}  & \gain{+10.54}  & \gain{+12.78}\\
        $\text{\ourmethod-3B}_\text{+BLAIR-BASE}$ & 21.69 & \ubold{25.62} & \ubold{22.17} & 24.22\\
        & \gain{+2.82} & \gain{+6.09}  & \gain{+4.74}  & \gain{+4.20}\\
        $\text{\ourmethod-3B}_\text{+BLAIR-LARGE}$ & \highlight{26.51} & \highlight{27.04} & \highlight{23.10} & \highlight{27.40}\\
        & \gain{+1.65} & \gain{+4.60}  & \gain{+4.18}  & \gain{+2.86}\\
        \bottomrule
    \end{tabular}
    }
    \label{tab:amazonc4-results}
    \vspace{-1em}
\end{table}

\textbf{Results on ESCI.} Table~\ref{tab:esci-results} reports NDCG@100 scores on the conventional product search benchmark ESCI. We observe that \ourmethod consistently improves retrieval performance across all four domains and retriever architectures. Notably, even when applied to the sparse BM25 retriever, \ourmethod yields substantial gains—up to +21.45 NDCG points in the Video Games domain—demonstrating its ability to enhance classic lexical systems. For dense retrievers such as BLAIR, \ourmethod brings improvements of up to +18.76 over the base models, and consistently outperforms prompting-based rewriting with GPT-4o. Remarkably, \ourmethod achieves the best performance across all four product categories, underscoring its effectiveness and overall superiority.

\textbf{Results on Amazon-C4.} We further evaluate \ourmethod on the Amazon-C4 dataset, which contains complex product search queries expressed in natural language. This setting also enables us to assess the model’s cross-domain generalization ability: we train on queries from all categories \textit{except} the four test domains (Video Games, Baby Products, Office Products, and Sports and Outdoors), and evaluate performance on these held-out domains.

As shown in Table~\ref{tab:amazonc4-results}, \ourmethod achieves the best performance across all four domains and retriever architectures, demonstrating strong generalization beyond the training distribution. This cross-domain evaluation highlights \ourmethod's ability to generalize from training on one set of product categories to effectively handling unseen categories during testing. Moreover, prompting-based query rewriting using frozen LLMs (e.g., GPT-4o or Qwen-Instruct) either yields negligible improvement or even degrades performance—particularly on dense retrievers. In contrast, both BM25 and dense models see notable gains when combined with \ourmethod, indicating the value of interaction-based learning. By receiving direct feedback from the recommendation system during training, the LLM gradually learns how to rewrite queries in a way that maximizes downstream task performance.

Note that we do not report SFT results in our tables, as prior analysis (Theorem~\ref{theorem:sft}) shows that, under sufficient data and optimization, the learned policy from SFT converges toward the data-generating model (e.g., GPT-4o). Therefore, we use GPT-4o performance as a practical reference point. Across both ESCI and Amazon-C4, \ourmethod consistently outperforms GPT-4o-based prompting methods—demonstrating its potential to go beyond the limitations of the SFT paradigm in both performance and adaptability.

\subsection{Sequential Recommendation}
\label{sec:seq_rec}
\subsubsection{Experimental Setup}
\textbf{Task Definition.} In this task, the model receives a user’s historical interaction sequence $s \in \mathcal{S}$ (e.g., a list of previously viewed or purchased items) and is expected to recommend the most relevant next item. To support this process, the LLM generates a text $a \in \mathcal{A}$—a query describing what the user probably will purchase next. This could take the form of key attributes, product type, or usage scenario, serving as a query-like signal to input the downstream retriever. To evaluate performance, we define a relevance dictionary $\mathcal{D}$ that maps each historical sequence $s$ to the ground-truth next item. The reward score $f(a|s) \in \mathbb{R}$ is computed by comparing the retrieved items (from $a$) against the target set $\mathcal{D}(s)$ using standard retrieval metrics such as Recall@K and NDCG@K. In our implementation, we use NDCG@K as the training reward for \ourmethod. In real-world systems, the dictionary $\mathcal{D}$ can be constructed from user interaction logs, purchase sequences, or other behavioral data sources.

\textbf{Dataset.}
We conduct experiments on the Amazon Beauty dataset following the split protocol from \citet{hou2024bridging}, where data are partitioned into training, validation, and test sets by absolute timestamp. To evaluate different generalization capabilities, we define two test-time settings: (1) \textbf{Transductive setting:} All candidate items in the test sequence (both history and target) have appeared in the training set; (2) \textbf{Inductive setting:} None of the test-time items are seen during training.

\textbf{Baselines.}
We compare \ourmethod with two families of baselines: \textbf{(1)} \textit{Text-aware Sequential Recommendation (SRec) models}, including SASRec \citep{kang2018self} and UniSRec \citep{hou2022towards}, combined with BLAIR as the item encoder (base and large variants). These methods use sequence modeling with features from textual encoders. \textbf{(2)} \textit{Prompting-based query rewriting}, where frozen LLMs (GPT-4o or Qwen-2.5-3B-Instruct) generate rewritten inputs from user histories, which are then fed into a retriever (e.g., BM25).

\begin{table}[t]
\centering
\caption{\textbf{Performance comparison of different methods on the \textbf{sequential recommendation task} on Amazon Beauty dataset.} We report the Recall@k (R) and NDCG@k (N) scores. The best performance score is denoted in \highlight{bold}. The numbers in \textcolor{gray}{gray} indicate the absolute improvement of \ourmethod over the initialized policy, i.e., Qwen-2.5-3B-Instruct.}
\resizebox{0.9\linewidth}{!}{
\begin{tabular}{lcccccccccccc}
\midrule
\multirow{2}{*}{\textbf{Model}} 
& \multicolumn{4}{c}{\textbf{Transductive Setting}} 
& \multicolumn{4}{c}{\textbf{Inductive Setting}} \\
\cmidrule(lr){2-5} \cmidrule(lr){6-9} \cmidrule(lr){10-13}
& R@10 & N@10 & R@50 & N@50 
& R@10 & N@10 & R@50 & N@50 \\
\midrule
\multicolumn{13}{l}{\textbf{Text-aware SRec baselines}} \\
SASRec$_\text{+BLAIR-BASE}$ & 3.72 & 1.55 & 7.81 & 2.44 & 0.20 & 0.90 & 1.40 & 0.35\\
SASRec$_\text{+BLAIR-LARGE}$ & 3.90 & 2.12 & 8.74 & 3.17 & 0.40 & 0.15 & 1.50 & 0.40\\
\hdashline[2pt/2pt] \\ [-9pt]
UniSRec$_\text{+BLAIR-BASE}$ & 4.09 & \highlight{2.23} & \highlight{8.74} & \highlight{3.21} & 3.70 & 2.08 & 6.20 & 2.61\\
UniSRec$_\text{+BLAIR-LARGE}$ & \highlight{4.09} & 2.17 & 8.55 & 3.17 & 3.60 & 2.08 & 6.00 & 2.60  \\
\midrule
\multicolumn{13}{l}{\textbf{Query rewriting baselines}} \\
GPT-4o$_\text{+BM25}$ & 0.00 & 0.00 & 1.48 & 0.33 & 1.00 & 0.50 & 2.90 & 0.90\\
Qwen-2.5-3B-Instruct$_\text{+BM25}$ & 1.30 & 0.75 & 2.41 & 0.98 & 1.80 & 0.84 & 3.60 & 1.25 \\
\midrule
\multicolumn{13}{l}{\textbf{Ours}} \\
$\text{\ourmethod-3B}_\text{+BM25}$ & 3.53 & 1.74 & 5.76 & 2.22 & \highlight{6.00} & \highlight{3.38} & \highlight{8.30} & \highlight{3.89}\\
& \gain{+2.23} & \gain{+0.99}  & \gain{+3.35}  & \gain{+1.24} & \gain{+4.20} & \gain{+2.54}  & \gain{+4.70}  & \gain{+2.64}\\
\bottomrule
\end{tabular}
}

\label{tab:recall-table}
\vspace{-1em}
\end{table}

\subsubsection{Results}
Table~\ref{tab:recall-table} summarizes the performance under both transductive and inductive settings. We observe that the prompting-based baselines using frozen LLMs (e.g., GPT-4o and Qwen-2.5-3B-Instruct) perform poorly across the board—highlighting the difficulty of this task for generic LLMs without adaptation. However, when trained under the \ourmethod framework, the finetuned Qwen-2.5-3B-Instruct model exhibits significant performance gains, especially in the inductive setting. For instance, in Recall@10 and NDCG@50, \ourmethod improves upon its initialized policy by +4.20 and +2.64 points respectively. Moreover, \ourmethod outperforms strong sequential baselines like UniSRec in the inductive setting, demonstrating its superiority in cold-start or unseen-item scenarios.

In the transductive setting, \ourmethod remains competitive but lags behind specialized SRec models. This is not surprising, as traditional sequential recommendation methods are explicitly trained on large-scale user-item sequences and directly model sequential dependencies. In contrast, \ourmethod leverages LLMs to generate natural language queries about the user's next likely purchase—relying on their strength in reasoning and generalization. However, next-item prediction based solely on interaction history is often not a task that lends itself to explicit reasoning in language space. The relationship between past and future items may be weak or non-causal, making it inherently difficult for LLMs to perform this task effectively. More analysis of why \ourmethod performs better in the inductive setting than in the transductive setting can be found in Appendix~\ref{app:seq_rec_analysis}.

\subsection{Product Re-ranking}
\label{subsec:rerank}
\subsubsection{Experimental Setup}

\textbf{Task Definition.} In the product re-ranking task, the input is a natural language query and an initial candidate list of $m$ items, denoted as $s=(q, C) \in \mathcal{S}$ where $q$ is the user query and $C=[i_1,\dots,i_m]$ is the candidate set. The goal is to reorder $C$ so that items more relevant to $q$ are ranked higher. Given $s$, the LLM produces $a \in \mathcal{A}$, a re-ranked list of the $m$ candidates (i.e., a permutation of $C$), denoted by $a=[i_{\pi(1)},\dots,i_{\pi(m)}] \in \mathcal{A}$, where $\pi$ is a permutation function over $\{1,\dots,m\}$. To evaluate performance, we define a relevance dictionary $\mathcal{D}$ that maps each input $s$ to corresponding ground-truth item list. The reward score $f(a|s) \in \mathbb{R}$ is computed by comparing the re-ranked list $a$ against the target item list $\mathcal{D}(s)$. We use NDCG@10 as the ranking metric.

\textbf{Dataset and Baselines.} We conduct experiments on the ESCI dataset \citep{reddy2022shopping}. The candidate items $C$ are retrieved using the Rec-R1-Retriever given the user query, and subsequently used for the re-ranking stage. To ensure fair comparison, all re-ranking models operate on the same candidate lists obtained from the Rec-R1-Retriever. We compare \ourmethod with two families of baselines: (1) \textit{traditional cross-encoder re-rankers} \citep{qwen3embedding}, where we adopt the state-of-the-art open-source Qwen3-Reranker family, including the 4B and 8B variants; and (2) \textit{LLM-based re-rankers}, including GPT-4o and Qwen2.5-3B-Instruct, which directly generate ranking orders conditioned on the query and candidate list. Further implementation details and hyperparameters are provided in Appendix~\ref{appendix:rerank_details}.

\subsubsection{Results}

\begin{table}[t]
    \centering
    \caption{\textbf{Performance comparison of different re-rankers on \textbf{ESCI}}. We report the NDCG@10 scores across four domains. The best performance is denoted in \highlight{bold}. The numbers in \textcolor{gray}{gray} indicate the improvement of \ourmethod over its corresponding initialized model, i.e., Qwen2.5-3B-Instruct.}
    \resizebox{0.9\linewidth}{!}{
    \begin{tabular}{lcccc}
        \toprule
        Model & Video Games & Baby Products & Office Products & Sports and Outdoors \\
        \midrule
        \multicolumn{2}{l}{\textbf{Cross-encoder re-rankers}} \\
        Qwen3-Reranker-4B & 44.29 & 46.41 & 48.06 & 47.42 \\
        Qwen3-Reranker-8B & 44.28 & 44.95 & 47.73 & 44.51 \\
        \midrule
        \multicolumn{2}{l}{\textbf{LLM-based re-ranker baselines}} \\
        Qwen2.5-3B-Instruct & 5.38 & 9.21 & 4.26 & 4.37 \\
        GPT-4o & \highlight{61.57} & 56.18 & 53.62 & 53.62 \\
        \midrule
        \multicolumn{2}{l}{\textbf{Ours}} \\
        \ourmethod-3B & 53.04 & \highlight{57.08} & \highlight{60.91} & \highlight{62.91} \\
         & \gain{+47.66} & \gain{+47.87} & \gain{+56.65} & \gain{+58.54} \\
        \bottomrule
    \end{tabular}
    }
    \label{tab:rerank-results}
    \vspace{-1em}
\end{table}

Table~\ref{tab:rerank-results} shows the re-ranking performance of different models on the ESCI dataset. From the table, \ourmethod consistently achieves the best overall performance across all domains. Notably, compared with the initialized Qwen2.5-3B-Instruct model, \ourmethod delivers very large improvements. Furthermore, against SoTA open-source embedding/reranker models (Qwen3-Reranker family), \ourmethod demonstrates substantial superiority. These results confirm that \ourmethod’s optimization framework can also be applied to the closed-set item generation paradigm within LLM4Rec, further supporting our claim that Rec-R1 provides a general way to bridge LLMs and RecSys.

\subsection{Does \ourmethod Forget? A Generalization Analysis}
\label{subsec:forget}
\begin{figure}[t]
    \centering
    \includegraphics[width=\linewidth]{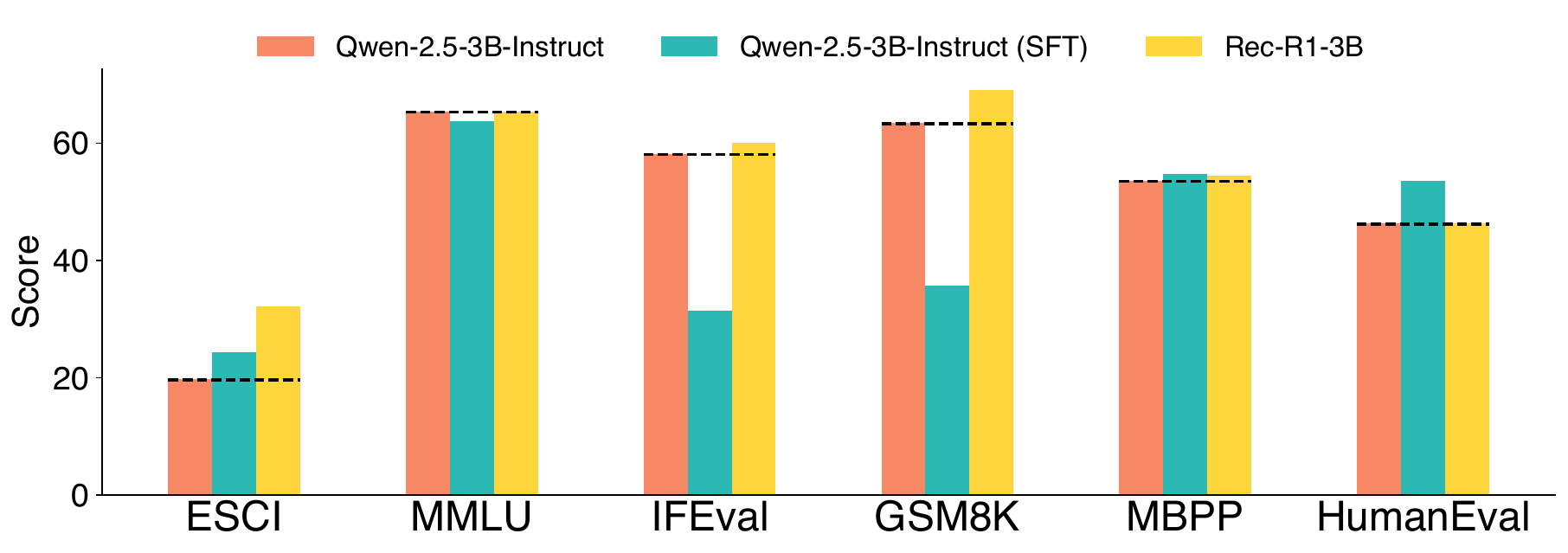}
    \caption{Generalization analysis across six benchmarks. We compare the initialized model (Qwen-2.5-3B-Instruct), its SFT variant trained on GPT-4o–generated SFT-data (ESCI), and our \ourmethod-3B model trained via RL. \textbf{Note that Rec-R1 is only trained on the task-specific ESCI data, whose format drastically differs from the other benchmark datasets.}}
    \label{fig:generalization}
\end{figure}

While \ourmethod is explicitly designed to enhance recommendation performance, an important question is whether it can preserve the general-purpose capabilities of the underlying language model. To this end, we compare three models: (1) \textbf{Qwen-2.5-3B-Instruct}; (2) \textbf{Qwen-2.5-3B-Instruct (SFT)}, fine-tuned on ESCI query rewriting data generated by GPT-4o; and (3) \textbf{\ourmethod-3B}, trained using our RL framework on the same ESCI task, but without access to GPT-4o outputs. We evaluate all models across six tasks spanning different axes of generalization: ESCI (recommendation), MMLU (factual knowledge), IFEval (instruction following), GSM8K (math reasoning), and two coding benchmarks: MBPP and HumanEval. Results are shown in Figure~\ref{fig:generalization}. On ESCI, SFT yields improvements over the base model, while \ourmethod achieves substantially higher gains—without relying on GPT-4o-generated data. On MMLU, all models—including \ourmethod—achieve comparable accuracy. This suggests that neither SFT nor \ourmethod compromises the model’s general knowledge. 

However, striking differences emerge on IFEval. Here, SFT suffers a dramatic performance drop—losing over 27 points—\textbf{while \ourmethod not only avoids degradation but actually improves slightly over the original model.} This highlights a key advantage of our \ourmethod: by optimizing directly for task-specific performance without overriding the model’s generative distribution via next token prediction, \ourmethod preserves instruction-following capabilities more effectively. We observe a similar trend on GSM8K, where \ourmethod improves upon the initialized model while SFT lags far behind. In contrast, both SFT and \ourmethod maintain strong performance on coding tasks like MBPP and HumanEval. This is likely because our SFT data involved JSON-style outputs to facilitate answer extraction, which do not interfere with the model’s code generation ability. \textbf{These results highlight the promise of \ourmethod as a new paradigm for LLM adaptation—one that enables strong task-specific improvements without compromising general capabilities. }



\section{Conclusion}
In this work, we present \ourmethod, a reinforcement learning framework that bridge LLMs with recommendation systems through direct feedback. Rec-R1 achieves strong performance across tasks and retrievers while preserving general-purpose capabilities, offering a scalable alternative to prompting and SFT. 

\bibliography{main}
\bibliographystyle{tmlr}

\newpage
\DoToC
\newpage
\appendix
\section{Applying \ourmethod to Diverse Recommendation Paradigms}
\label{app:paradigms}

Figure~\ref{fig:taxonomy} provides a taxonomy of how generative LLMs are incorporated into recommender systems, adapted from \cite{lin2025can}. The taxonomy is divided into two main categories: \textbf{LLM for Feature Engineering} and \textbf{LLM as Scoring/Ranking Functions}, each with two sub-paradigms. In our experiments, we validate the applicability of \ourmethod across all of the four major paradigms:
\begin{itemize}
\item \textbf{Query Rewriting:} In the product search task, we use BM25 as the retriever, and the LLM generates rewritten user queries to improve retrieval. This corresponds to the top-left panel of Figure~\ref{fig:taxonomy} and demonstrates the   of \ourmethod for textual query rewriting.

\item \textbf{User-Level Feature Augmentation:} When combined with dense discriminative models such as BLAIR, the LLM augments the input query with semantically richer expressions of user intent. This mirrors the top-right panel of Figure~\ref{fig:taxonomy}, where LLMs act as textual feature generators for downstream ranking models.

\item \textbf{Open-Set Item Generation:} In the sequential recommendation task, the LLM is provided only with user-side information (e.g., purchase history), and is required to generate a textual prediction of the next item. This text is then matched against the item pool to select the final recommendation. This aligns with the bottom-right panel of Figure~\ref{fig:taxonomy}.

\item \textbf{Closed-Set Item Generation:} In the product re-ranking task, the LLM is trained to rank or score candidates conditioned on user query and initialized product candidates through reward signals from ranking quality. This aligns with the bottom-left panel in Figure~\ref{fig:taxonomy}. Notably, prior work in this direction has largely relied on prompting \citep{zhang2023chatgpt, hou2024large, di2023evaluating, di2023retrieval} or SFT \citep{yang2023palr, luo2024integrating}. However, as shown in our generalization experiments, such approaches may lead to catastrophic forgetting. In contrast, \ourmethod maintains strong domain-specific performance without sacrificing general capabilities (\S \ref{subsec:forget}), making it a more robust solution.

\end{itemize}
These results collectively highlight the versatility of \ourmethod—it is applicable across diverse paradigms in recommender systems, and agnostic to the architecture of the retriever or scoring model.

\paragraph{A Path Toward Unified Training.}
Given its compatibility across paradigms and its ability to retain general-purpose capabilities, we believe \ourmethod provides a strong foundation for \emph{continual, reinforcement-based alignment of LLMs with evolving recommendation goals}. Future work will explore extending \ourmethod into more tasks, ultimately enabling lifelong recommendation agents that adapt flexibly to new tasks and domains without retraining from scratch.


\section{Related Work}
\subsection{Generative LLMs for Recommendation Tasks}
Recently, large language models (LLMs) have significantly impacted recommendation systems by leveraging their strong generalization, reasoning, and semantic understanding abilities. These approaches can be broadly categorized into several main directions.

\textbf{Feature Engineering and Augmentation.} LLMs have been extensively used to enrich recommendation data. They generate auxiliary features that enhance user profiling and item understanding, thus addressing data sparsity and improving the recommendation quality \citep{xi2024towards, liu2025understanding, torbati2023recommendations, shi2023llama, li2024ecomgpt}. 

\textbf{LLM as Scoring and Ranking Functions.} Researchers have adapted LLMs as direct ranking or scoring components within recommendation pipelines. Methods such as P5 \citep{geng2022recommendation},  M6-Rec \citep{cui2022m6}, and InstructRec \citep{zhang2023recommendation} explore LLMs' ability to simultaneously handle multiple recommendation subtasks, including scoring, generation, and re-ranking. Models like RecRanker \citep{luo2024recranker} further leverage LLMs' natural language understanding to integrate multiple ranking strategies effectively.

\textbf{Conversational and Interactive Recommendations:} LLMs facilitate more sophisticated interactions between users and recommendation systems through conversational agents, significantly enhancing user engagement and recommendation explainability \citep{luo2024recranker, zhou2020towards, gao2023chat}. 

For a comprehensive review of how recommender systems benefit from LLMs across different pipeline stages and application scenarios, readers can refer to the recent survey by \cite{lin2025can}. Notably, our proposed \ourmethod framework is broadly applicable across these paradigms—it is not tailored to any specific application scenario or retriever architecture, but instead provides a general modeling and optimization approach for aligning LLMs with recommendation tasks via closed-loop learning.

\subsection{Reinforcement Learning for Recommendation Systems}
Before the era of large language models (LLMs), reinforcement learning (RL) had been explored in recommendation systems for various objectives, such as optimizing long-term user satisfaction and improving sequential decision-making. These works typically reformulate recommendations as a Markov Decision Process (MDP), enabling agents to learn from user interaction sequences \citep{wang2020kerl, zhao2018recommendations}. For example, \cite{liu2023exploration} extend DDPG to session-based recommendation, while \cite{xin2020self} propose self-supervised RL with SQN and SAC. Additional efforts incorporate negative sampling, contrastive learning, and reward modeling to enhance learning signals \citep{ren2023contrastive,xin2022supervised,xin2022rethinking}. Different from these approaches, which aim to improve the recommendation model itself, our method treats the recommender as a black-box environment. We instead apply RL to optimize the LLM's generation policy, using feedback from the recommendation system to guide LLM training, which enables task-specific alignment without altering the underlying recommender.

In contrast, with the rise of LLMs, recent efforts begin to explore how to integrate RL and LLM to improve recommendation systems. For instance, \cite{jeong2023factual} apply RLHF to align a language model with factuality, personalization, and appeal in movie recommendations. However, their framework uses a reward model trained offline, without interacting with the recommender system in the loop—mirroring the RLHF setup of InstructGPT. This approach not only lacks real-time adaptation to system feedback but also risks reward hacking. Other efforts take alternative views: \citet{sun2024rlrf4rec, lu2024aligning} attempt to bring in RecSys feedback but either restrict to DPO-style offline preference tuning (easily overfit on the static datasets) or narrow scenarios like sequential recommendation with fixed candidate sets.

Different from all of the above, our method \ourmethod directly optimizes the LLM with real-time reward signals from the recommendation system. The recommender is treated as a black-box environment, and the LLM adapts its generation policy (e.g., query rewriting or user profile generation) to maximize actual downstream task performance. This closed-loop RL training allows for general applicability across various recommendation tasks and retrievers, without relying on complex reward models or curated preference labels.

\subsection{LLMs for Query Rewriting}
A growing body of research has investigated using large language models for query rewriting in retrieval tasks. In particular, recent works have explored leveraging LLMs to reformulate queries either in a zero-shot manner or via supervised fine-tuning \citep{gao2023precise, khattab2020colbert, ye2023enhancing, mackie2023generative, liu2022query}. These methods demonstrate that LLMs can generate semantically richer queries that better align with downstream retrieval models, thereby improving recall and ranking performance.

While these works provide important insights into query rewriting and optimization, there remain several key differences compared with our proposed Rec-R1 framework. First, {closed-loop recommendation feedback:} prior methods typically rely on synthetic or proxy reward signals, whereas we directly employ recommendation evaluation metrics (e.g., NDCG, Recall) from downstream models as reward signals—thus establishing a true closed-loop interaction between the LLM and the recommender system. Second, {model scale and capability:} most existing approaches were developed before the LLM era and used relatively weak models without strong reasoning or generalization abilities. In contrast, our work explicitly adapts powerful open-domain LLMs to recommendation scenarios, addressing the capability gap of traditional RecSys. Finally, {task scope:} prior studies primarily focused on improving search query refinement, while our framework is designed to support a broader range of recommendation tasks.

\section{Theorems and Proofs}
\subsection{Proofs of Fact \ref{theorem:sft}}
\label{app:proof_of_theo1}
\begin{lemma}[MLE Minimizes KL Divergence]
Let $\pi_g(a|s)$ be a fixed target policy (e.g., the data-generating policy), and let $\pi_\theta(a|s)$ be a parameterized policy class. Consider the following maximum likelihood estimation (MLE) objective:
\[
\max_\theta \mathbb{E}_{s \sim p(s), a \sim \pi_g(a|s)}[\log \pi_\theta(a|s)].
\]
Then maximizing this objective with respect to $\theta$ is equivalent to minimizing the expected Kullback-Leibler (KL) divergence between the target policy $\pi_g$ and the parameterized policy $\pi_\theta$:
\[
\min_\theta \mathbb{E}_{s \sim p(s)} \left[ D_{\mathrm{KL}}\big(\pi_g(\cdot|s) \| \pi_\theta(\cdot|s)\big) \right].
\]
\end{lemma}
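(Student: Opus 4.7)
The plan is to expand the expected KL divergence and show that, up to a $\theta$-independent additive constant, it is exactly the negative of the MLE objective; maximizing one therefore minimizes the other. I would begin by writing out the pointwise definition of the KL divergence between $\pi_g(\cdot|s)$ and $\pi_\theta(\cdot|s)$ and splitting the log-ratio, giving
\begin{equation*}
D_{\mathrm{KL}}\bigl(\pi_g(\cdot|s)\,\|\,\pi_\theta(\cdot|s)\bigr) = \mathbb{E}_{a\sim\pi_g(a|s)}\bigl[\log \pi_g(a|s)\bigr] - \mathbb{E}_{a\sim\pi_g(a|s)}\bigl[\log \pi_\theta(a|s)\bigr].
\end{equation*}

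Next I would take the outer expectation over $s\sim p(s)$ and use linearity of expectation together with the tower rule to combine the nested expectations into a single joint expectation over $(s,a)\sim p(s)\pi_g(a|s)$:
\begin{equation*}
\mathbb{E}_{s\sim p(s)}\!\left[D_{\mathrm{KL}}\bigl(\pi_g(\cdot|s)\,\|\,\pi_\theta(\cdot|s)\bigr)\right] = \underbrace{\mathbb{E}_{s\sim p(s),\,a\sim\pi_g(a|s)}\!\bigl[\log \pi_g(a|s)\bigr]}_{=: -H(\pi_g\mid p)} \;-\; \mathbb{E}_{s\sim p(s),\,a\sim\pi_g(a|s)}\!\bigl[\log \pi_\theta(a|s)\bigr].
\end{equation*}
The first term is the (negative) conditional entropy of $\pi_g$ under $p(s)$, which is a fixed quantity that does not depend on $\theta$. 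The second term is precisely the MLE objective stated in the lemma. Minimizing the left-hand side over $\theta$ is therefore equivalent to maximizing the MLE objective, which is the desired equivalence.

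The argument is essentially a one-line identity, so there is no serious obstacle. The only mild subtlety I would flag is a well-definedness condition for the KL divergence: one needs the absolute-continuity assumption $\pi_g(\cdot|s)\ll\pi_\theta(\cdot|s)$ (for $p$-almost every $s$), so that the $\log \pi_\theta$ term is finite wherever $\pi_g$ puts mass. Under the sufficient-expressivity assumption (i) of Fact~\ref{theorem:sft}, this holds at the infimum and is benign, since any $\theta$ violating it yields an infinite objective and is trivially non-optimal. With this caveat noted, interchanging the expectations is justified by Fubini (the integrand is bounded above by the constants governing finiteness of the entropy term), completing the equivalence.
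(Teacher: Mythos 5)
Your proposal is correct and is essentially the same argument as the paper's: both rest on the identity that the cross-entropy $\mathbb{E}_{a\sim\pi_g}[-\log\pi_\theta(a|s)]$ decomposes into $D_{\mathrm{KL}}(\pi_g\|\pi_\theta)$ plus the $\theta$-independent entropy of $\pi_g$, so the two optimization problems differ only by a constant. Your added remark on absolute continuity and Fubini is a harmless refinement the paper omits, but it does not change the substance of the proof.
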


\begin{proof}
We start by considering the expected negative log-likelihood under the distribution induced by $\pi_g$:
\begin{align}
\mathbb{E}_{s \sim p(s), a \sim \pi_g(a|s)}[-\log \pi_\theta(a|s)]
&= \mathbb{E}_{s \sim p(s)} \mathbb{E}_{a \sim \pi_g(a|s)}[-\log \pi_\theta(a|s)] \\
&= \mathbb{E}_{s \sim p(s)} \left[\sum_a \pi_g(a|s)\big(-\log \pi_\theta(a|s)\big)\right].
\end{align}
By definition of the KL divergence, we have the identity:
\[
D_{\mathrm{KL}}\big(\pi_g(\cdot|s)\|\pi_\theta(\cdot|s)\big)
= \sum_a \pi_g(a|s)\log\frac{\pi_g(a|s)}{\pi_\theta(a|s)}.
\]
Rearranging terms gives:
\[
\mathbb{E}_{a \sim \pi_g(a|s)}[-\log \pi_\theta(a|s)]
= D_{\mathrm{KL}}\big(\pi_g(\cdot|s)\|\pi_\theta(\cdot|s)\big) + \mathbb{H}\big(\pi_g(\cdot|s)\big),
\]
where $\mathbb{H}(\pi_g(\cdot|s)) = -\sum_a \pi_g(a|s)\log\pi_g(a|s)$ is the entropy of the fixed distribution $\pi_g(\cdot|s)$, which is independent of $\theta$.

Taking expectation over $s \sim p(s)$, we obtain:
\[
\mathbb{E}_{s\sim p(s),a \sim \pi_g(a|s)}[-\log \pi_\theta(a|s)] 
= \mathbb{E}_{s \sim p(s)}[D_{\mathrm{KL}}(\pi_g(\cdot|s)\|\pi_\theta(\cdot|s))] + \mathbb{E}_{s \sim p(s)}[\mathbb{H}(\pi_g(\cdot|s))].
\]

Since the second term is independent of $\theta$, minimizing the negative log-likelihood is equivalent to minimizing the KL divergence. Thus the lemma follows.
\end{proof}

\textbf{Below is the proof of Fact \ref{theorem:sft}.}
\begin{proof}
By Lemma 1, the supervised fine-tuning (SFT) objective of maximizing the expected log-likelihood:
\[
\max_\theta \mathbb{E}_{s\sim p(s),a\sim \pi_g(a|s)}[\log\pi_\theta(a|s)]
\]
is equivalent to minimizing the KL divergence:
\[
\min_\theta \mathbb{E}_{s\sim p(s)}[D_{\mathrm{KL}}(\pi_g(\cdot|s)\|\pi_\theta(\cdot|s))].
\]

Under assumption (iii) (Data Sufficiency), as the number of samples $N \to \infty$, the empirical distribution $\hat{p}(s,a)$ almost surely converges to the true distribution $p(s)\pi_g(a|s)$. Hence, the empirical optimization objective:
\[
\frac{1}{N}\sum_{(s,a)\sim\hat{p}(s,a)}\log\pi_\theta(a|s)
\]
almost surely converges to the true expectation:
\[
\mathbb{E}_{s\sim p(s),a\sim\pi_g(a|s)}[\log\pi_\theta(a|s)].
\]

Thus, under assumptions (i) (Sufficient Expressivity) and (ii) (Optimization Convergence), the optimization process applied to the empirical objective finds the global optimum that minimizes the expected KL divergence. Formally, we have:
\[
\pi_{\theta^*}=\arg\min_\theta\mathbb{E}_{s\sim p(s)}[D_{\mathrm{KL}}(\pi_g(\cdot|s)\|\pi_\theta(\cdot|s))].
\]

This completes the proof.
\end{proof}

\subsection{SFT Performance Bound in Terms of KL Divergence}
\label{app:sft_performance_bound}

{In this subsection, we investigate the performance gap between the supervised fine-tuning (SFT) policy $\pi_{\mathrm{SFT}}$ and the fixed target policy $\pi_g$. 
Our goal is to establish that the difference in expected return between any policy $\pi$ and $\pi_g$ is controlled by their expected KL divergence. 
This provides a theoretical justification that SFT, which minimizes the KL divergence to $\pi_g$, achieves performance within $O(\sqrt{\kappa})$ of the teacher policy $\pi_g$.}

{
Let the downstream metric $f(a\mid s)$ be bounded in $[0,R_{\max}]$, and define the performance of a policy $\pi$ as
$$
J(\pi)\;:=\; \mathbb{E}_{s\sim p(s)}\;\mathbb{E}_{a\sim \pi(\cdot\mid s)}\!\left[f(a\mid s)\right].
$$
Then, we have the following theorem, which is a simplified version of the classic performance difference theorem \citep{schulman2015trust}. 
}

{
\begin{theorem}[Performance Difference Upper Bound]
Let $\pi$ be any policy and $\pi_g$ be the data-generating (teacher) policy. 
Suppose the downstream metric is bounded by $0 \le f(a \mid s) \le R_{\max}$. 
Then the performance difference satisfies
\[
|J(\pi) - J(\pi_g)| \;\le\; R_{\max} \sqrt{\tfrac{1}{2}\, \mathbb{E}_{s \sim p(s)} \left[ D_{\mathrm{KL}}\big(\pi_g(\cdot \mid s)\,\|\,\pi(\cdot \mid s)\big) \right]}.
\]
\label{theorem:perf_diff}
\end{theorem}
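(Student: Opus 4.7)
The plan is to bound the performance gap $|J(\pi) - J(\pi_g)|$ pointwise in $s$ by the total variation (TV) distance between $\pi_g(\cdot|s)$ and $\pi(\cdot|s)$, convert TV to KL via Pinsker's inequality, and finally exchange the outer expectation over $s$ with the square root via Jensen's inequality. This is the standard recipe for deriving performance difference bounds in single-step (contextual-bandit style) settings, and it applies directly here because Theorem~\ref{theorem:perf_diff} assumes no sequential structure: each $(s,a)$ pair is evaluated independently by $f$.

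Concretely, I would start from the identity
\[
J(\pi) - J(\pi_g) \;=\; \mathbb{E}_{s \sim p(s)}\!\left[\sum_a \bigl(\pi(a|s) - \pi_g(a|s)\bigr)\, f(a|s)\right],
\]
apply the triangle inequality to push the absolute value inside the outer expectation, and reduce the problem to a pointwise bound on $\bigl|\sum_a (\pi_g(a|s) - \pi(a|s)) f(a|s)\bigr|$ for each fixed $s$. For that pointwise bound, a Hahn-type decomposition of the action space into the subsets where $\pi_g(a|s) \ge \pi(a|s)$ and where $\pi_g(a|s) < \pi(a|s)$, combined with the range constraint $f(a|s) \in [0, R_{\max}]$, yields
\[
\Bigl|\sum_a (\pi_g(a|s) - \pi(a|s))\, f(a|s)\Bigr| \;\le\; R_{\max}\, \bigl\|\pi_g(\cdot|s) - \pi(\cdot|s)\bigr\|_{\mathrm{TV}},
\]
with the TV distance normalized as $\tfrac{1}{2}\sum_a |\pi_g(a|s) - \pi(a|s)|$. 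Pinsker's inequality then gives $\|\pi_g(\cdot|s) - \pi(\cdot|s)\|_{\mathrm{TV}} \le \sqrt{\tfrac{1}{2}\, D_{\mathrm{KL}}(\pi_g(\cdot|s)\,\|\,\pi(\cdot|s))}$. Taking expectation over $s \sim p(s)$ and applying Jensen's inequality to the concave square root, i.e., $\mathbb{E}[\sqrt{X}] \le \sqrt{\mathbb{E}[X]}$, produces the claimed bound.

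The main care point I expect is the second step: one must verify that the worst-case configuration of $f$ over the Hahn decomposition produces the constant $R_{\max}$ rather than some smaller factor, because the bound is sensitive to the \emph{range} of $f$ and not merely its oscillation around its mean. Setting $f = R_{\max}$ on one subset and $f = 0$ on the other saturates the inequality, confirming that the prefactor cannot be improved without further assumptions. The remaining steps are routine — Pinsker is a textbook inequality, and the direction of Jensen is unambiguously the one we need because $\sqrt{\cdot}$ is concave — and no additional assumptions beyond absolute continuity (to keep the KL terms finite) are required, with the bound remaining trivially valid when $D_{\mathrm{KL}}$ is infinite.
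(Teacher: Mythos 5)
Your proposal is correct and follows essentially the same route as the paper's proof: a pointwise-in-$s$ bound of the reward gap by $R_{\max}$ times the total variation distance (which the paper obtains by citing the dual representation of TV, and you obtain by making the Hahn decomposition explicit), followed by Pinsker's inequality and Jensen's inequality for the concave square root. The only difference is presentational, so no further comparison is needed.
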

}

\begin{proof}
Fix a state $s$. Define
\[
P(a) = \pi_g(a \mid s), \quad Q(a) = \pi(a \mid s), \quad g_s(a) = \tfrac{f(a \mid s)}{R_{\max}} \in [0,1].
\]
The reward difference at $s$ can be written as
\[
\Delta(s) = \mathbb{E}_{a \sim Q}[f(a \mid s)] - \mathbb{E}_{a \sim P}[f(a \mid s)] 
= R_{\max} \Big(\mathbb{E}_{Q}[g_s] - \mathbb{E}_{P}[g_s]\Big).
\]
By the dual representation of total variation,
\[
|\Delta(s)| \le R_{\max}\,\mathrm{TV}\!\big(\pi(\cdot \mid s), \pi_g(\cdot \mid s)\big).
\]

Since the performance of a policy $\pi$ and $\pi_g$ is defined as
\[
J(\pi) \;=\; \mathbb{E}_{s \sim p}\,\mathbb{E}_{a \sim Q}\!\left[f(a \mid s)\right], 
\qquad 
J(\pi_g) \;=\; \mathbb{E}_{s \sim p}\,\mathbb{E}_{a \sim P}\!\left[f(a \mid s)\right].
\]
Then we can write
\[
J(\pi) - J(\pi_g) \;=\; \mathbb{E}_{s \sim p}\,\Delta(s).
\]

By applying the inequality $|\mathbb{E}[X]| \leq \mathbb{E}[|X|]$, we obtain
\[
|J(\pi) - J(\pi_g)|
\;=\; \big|\mathbb{E}_{s \sim p}[\Delta(s)]\big|
\;\leq\; \mathbb{E}_{s \sim p}[|\Delta(s)|]
\;\leq\; R_{\max}\,\mathbb{E}_{s \sim p}\!\left[\mathrm{TV}\!\big(\pi(\cdot \mid s),\,\pi_g(\cdot \mid s)\big)\right].
\]

Next, by Pinsker’s inequality, for each $s$ we have
\[
\mathrm{TV}\!\big(\pi(\cdot \mid s), \pi_g(\cdot \mid s)\big) \le \sqrt{\tfrac{1}{2}\, D_{\mathrm{KL}}\!\big(\pi_g(\cdot \mid s)\,\|\,\pi(\cdot \mid s)\big)}.
\]

Averaging over $s$ and applying Jensen’s inequality for the concave square root, we conclude
\[
\mathbb{E}_{s}\!\left[\mathrm{TV}\right] 
\le \sqrt{\tfrac{1}{2}\, \mathbb{E}_{s}\!\left[D_{\mathrm{KL}}(\pi_g(\cdot \mid s)\,\|\,\pi(\cdot \mid s))\right]}.
\]

Combining these inequalities gives the desired bound.

Done.
\end{proof}

{
\paragraph{Implication for $\pi_{\mathrm{SFT}}$.} 
Since Theorem~\ref{theorem:sft} established that $\pi_{\mathrm{SFT}}$ minimizes the expected KL divergence to $\pi_g$, the above result implies that the performance difference between $\pi_{\mathrm{SFT}}$ and $\pi_g$ is at most $O(\sqrt{\kappa})$, where 
\[
\kappa = \mathbb{E}_{s \sim p(s)} \left[ D_{\mathrm{KL}}(\pi_g(\cdot \mid s)\,\|\,\pi_{\mathrm{SFT}}(\cdot \mid s)) \right].
\]
Thus, $\pi_{\mathrm{SFT}}$'s performance is limited by the teacher policy $\pi_g$.}

\subsection{Dominance of RL Objective over SFT}
\label{app:proof_of_theo3}

{
We now compare the performance of the \ourmethod solution $\pi_{\mathrm{RL}}$ with that of the SFT solution $\pi_{\mathrm{SFT}}$. 
Recall that the performance of a policy $\pi$ is defined as
\[
J(\pi) \;=\; \mathbb{E}_{s \sim p(s), a \sim \pi(\cdot \mid s)}[f(a \mid s)],
\]
Then, we have the following theorem.
}

{
\begin{theorem}[Superiority of RL over SFT]
Let $\pi_g$ denote the teacher policy, and let $\pi_{\mathrm{SFT}} \in \Pi$ be the supervised fine-tuning solution obtained by minimizing 
$\mathbb{E}_{s \sim p(s)}[D_{\mathrm{KL}}(\pi_g(\cdot\mid s)\|\pi(\cdot\mid s))]$ over the policy class $\Pi$.
Let $\pi_{\mathrm{RL}} \in \arg\max_{\pi \in \Pi} J(\pi)$ denote the RL solution, i.e., the policy in $\Pi$ that maximizes the expected reward.
Then
\[
J(\pi_{\mathrm{RL}}) \;\ge\; J(\pi_{\mathrm{SFT}}).
\]
\end{theorem}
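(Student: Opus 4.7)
The plan is to notice that the claim reduces, once definitions are unwound, to the tautology that the maximum of a function over a set upper-bounds the value of that function at any element of the set. Concretely, both paradigms in the paper optimize over the same parametric policy class $\Pi=\{\pi_\theta(\cdot\mid s):\theta\in\Theta\}$ defined by the LLM's parameterization, so $\pi_{\mathrm{SFT}}\in\Pi$ and $\pi_{\mathrm{RL}}\in\Pi$. Applying the defining property of $\pi_{\mathrm{RL}}\in\arg\max_{\pi\in\Pi} J(\pi)$ immediately gives
\[
J(\pi_{\mathrm{RL}})\;\ge\;J(\pi)\quad\text{for every }\pi\in\Pi,
\]
and specializing to $\pi=\pi_{\mathrm{SFT}}$ yields the desired inequality $J(\pi_{\mathrm{RL}})\ge J(\pi_{\mathrm{SFT}})$. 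Thus the proof consists of two lines: (i) record the membership $\pi_{\mathrm{SFT}}\in\Pi$, and (ii) invoke the argmax property.

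The only assumption worth making explicit is that $\Pi$ is indeed the common feasible set; this is natural in the paper's setup since SFT and \ourmethod fine-tune the same base LLM with the same parameterization, differing only in the training signal (KL-to-teacher versus downstream reward $f$). If one is concerned about attainment of the argmax, the statement can be relaxed to a supremum: for every $\epsilon>0$ one can choose an approximate RL optimizer $\pi_{\mathrm{RL}}^\epsilon\in\Pi$ with $J(\pi_{\mathrm{RL}}^\epsilon)\ge\sup_{\pi\in\Pi}J(\pi)-\epsilon\ge J(\pi_{\mathrm{SFT}})-\epsilon$, and letting $\epsilon\to 0$ recovers the conclusion. This mirrors the optimization-convergence assumption already used in Fact \ref{theorem:sft}.

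The hard part, in my view, is not the algebra but the interpretation. A strict inequality $J(\pi_{\mathrm{RL}})>J(\pi_{\mathrm{SFT}})$ cannot be shown without extra structure: if the teacher policy $\pi_g$ happens to already be an optimal policy in $\Pi$ with respect to $f$, then both procedures coincide in value and the inequality is tight. Pairing this result with Theorem \ref{theorem:perf_diff} sharpens the picture: SFT's return is within $O(\sqrt{\kappa})$ of $J(\pi_g)$, whereas RL can in principle reach $\max_{\pi\in\Pi}J(\pi)$, which explains the empirical gap observed in Figure \ref{fig:pof} and justifies calling the inequality a genuine ``superiority'' rather than a mere formal bound.
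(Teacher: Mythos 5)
Your proof is correct and is essentially the paper's own argument: the paper likewise observes that $\pi_{\mathrm{SFT}}\in\Pi$ and invokes the defining property of $\pi_{\mathrm{RL}}$ as the maximizer of $J$ over $\Pi$ (the paper's preliminary citation of the KL-based performance bound is contextual and not actually used in deriving the inequality). Your additional remarks on approximate maximizers and the impossibility of a strict inequality without further assumptions are accurate but not part of the paper's proof.
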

}

\begin{proof}
From Theorem~\ref{theorem:perf_diff}, we know that the performance gap between $\pi_{\mathrm{SFT}}$ and the teacher policy $\pi_g$ satisfies
\[
|J(\pi_{\mathrm{SFT}}) - J(\pi_g)| \;\le\; \varepsilon_{\mathrm{SFT}},
\]
where $\varepsilon_{\mathrm{SFT}} = R_{\max}\sqrt{\tfrac{1}{2}\,\kappa}$ and 
$\kappa = \mathbb{E}_{s \sim p(s)}[D_{\mathrm{KL}}(\pi_g(\cdot \mid s)\|\pi_{\mathrm{SFT}}(\cdot \mid s))]$.

Since $\pi_{\mathrm{RL}}$ is by definition the maximizer of $J(\pi)$ over $\Pi$, and $\pi_{\mathrm{SFT}} \in \Pi$, we directly have
\[
J(\pi_{\mathrm{RL}}) \;\ge\; J(\pi_{\mathrm{SFT}}).
\]

Done.
\end{proof}

{
\paragraph{Implication.}
This theorem shows that reinforcement learning (\ourmethod), by explicitly optimizing the downstream objective $J(\pi)$, is guaranteed to achieve performance no worse than SFT. 
In particular, while $\pi_{\mathrm{SFT}}$ is constrained to imitate $\pi_g$ (with performance at most $O(\sqrt{\kappa})$ close to the teacher), the RL policy $\pi_{\mathrm{RL}}$ can potentially surpass the teacher by exploiting exploration and reward optimization.
}

\section{Estimated Cost of SFT Data Generation and \ourmethod Model Training}
\label{app:pof_estimate}
To assess the cost-effectiveness of our approach, we compare it against a supervised fine-tuning (SFT) baseline that relies on GPT-4o-generated instruction data. The SFT pipeline requires two stages: (1) generating 5,408 samples using GPT-4o, which costs approximately \textbf{\$10.82} based on OpenAI’s pricing (\$2.50 per million input tokens and \$10.00 per million output tokens), and (2) training the model on two NVIDIA A100 GPUs for \textbf{35 minutes} (2 epochs), which adds an additional \textbf{\$4.78} (at an on-demand rate of \$4.10/hour per A100 via AWS). The total cost for SFT amounts to approximately \textbf{\$15.60}.

In contrast, \ourmethod requires no external data generation: it trains directly on synthetic data produced online by itself during learning. Remarkably, with only about \textbf{210 seconds} of training on the same two A100 GPUs (costing just \textbf{\$0.48}), \ourmethod already matches and even surpasses the performance of the SFT-trained model. Moreover, performance continues to improve with further training. This comparison highlights the substantial cost-efficiency of \ourmethod: it achieves superior performance at less than \textbf{1/30} of the SFT pipeline cost.

While our cost comparison is conducted on a small-scale experiment, the implications become more pronounced in real-world deployments. Supervised fine-tuning methods typically require generating millions of training examples and running long training cycles, leading to substantial costs in both data creation and computation. In contrast, \ourmethod eliminates the need for offline data generation and learns efficiently through online interaction, making it a significantly more cost-effective and practical solution for large-scale recommendation systems.

\section{Additional Experiment and Results Details}
\label{app:add_results}
\subsection{Product Search}
\subsubsection{Dataset Details}
\label{app:prod_search_dataset}
For the conventional setting, we use the ESCI dataset processed by \citet{hou2024bridging}, a benchmark derived from Amazon product search logs. Following their preprocessing protocol, we focus on four representative product categories: \textit{Video Games}, \textit{Baby Products}, \textit{Office Products}, and \textit{Sports and Outdoors}. We construct category-specific splits with 4,510 training examples, 898 validation examples, and 798 test examples. For all ESCI experiments, we use the same item pool as \citet{hou2024bridging}, which contains 1,367,729 product listings. Each category uses the full item pool for retrieval.

For the complex setting, we use the Amazon-C4 dataset introduced in \citet{hou2024bridging}, which contains complex natural language product queries. Since the released dataset provides only category-labeled test queries, we treat the four domains used in ESCI as our test set, and use queries from all other domains as the training and validation data. This results in 18,126 training examples, 2,722 validation examples, and 1,722 test examples. The corresponding item pool consists of 1,058,417 products, identical to that used by \citet{hou2024bridging}. As with ESCI, each domain-specific split uses the full item pool for retrieval. This cross-domain setup allows us to evaluate the generalization capability of \ourmethod when applied to unseen product categories in a more realistic, open-ended retrieval scenario.

\subsubsection{Implementation Details}
We implement \ourmethod using the VeRL library\footnote{https://github.com/volcengine/verl}, and run all experiments on two NVIDIA A100 80GB GPUs. 

\paragraph{Retriever Setup.} We support both sparse and dense retrievers in our framework. For sparse retrieval, we use Pyserini \citep{Lin_etal_SIGIR2021_Pyserini} with Lucene's BM25 implementation. Following standard practice, we set the BM25 hyperparameters as $k_1 = 1.2$ and $b = 0.75$. For dense retrieval, we build HNSW-based FAISS \citep{johnson2019billion} indices. The dense embeddings are first L2-normalized to enable cosine similarity search. We use IndexHNSWFlat with $M=32$ and efConstruction=200 to balance search accuracy and indexing speed.

\paragraph{Training Configuration.} We use Group Relative Policy Optimization (GRPO) as our reinforcement learning algorithm, following the implementation in VeRL. The language model is initialized from Qwen-2.5-3B-Instruct, and optimized with KL-regularized policy gradients. To control policy divergence, we apply a low-variance KL loss with coefficient $0.001$. 

Each prompt is used to generate 12 sampled responses using top-$p$ sampling ($p=0.95$) and temperature $0.6$. The rollout engine uses vLLM with memory budget capped at 30\% GPU utilization. Training is run for 5 epochs with a learning rate of $1\mathrm{e}{-6}$, global mini-batch size of 256, and micro-batch size of 2. We also enable gradient checkpointing and use Fully Sharded Data Parallelism (FSDP) with parameter and gradient offloading for memory efficiency.

We use NDCG@1000 as the reward during training (instead of NDCG@100 at evaluation time) to reduce reward sparsity and stabilize learning. All other parameters follow VeRL defaults unless otherwise specified. 

The prompts for product search can be found in Table \ref{tab:prompt-bm25} for BM25 and Table \ref{tab:prompt-dense} and \ref{tab:prompt-review-style} for Dense retrievers.

\begin{table}[t]
\centering
\caption{Prompt used in \ourmethod for product search tasks with BM25, where the LLM generates structured query terms based on a user query.}
\begin{tabular}{p{0.95\linewidth}}
\toprule
\textbf{Prompt Template for \ourmethod + BM25 (Product Search)} \\
\midrule
\texttt{<|im\_start|>system} \\
\texttt{You are a helpful AI assistant. You first think about the reasoning process in the mind and then provide the user with the answer.} \\
\texttt{<|im\_end|>} \\
\texttt{<|im\_start|>user} \\
\texttt{You are an expert in query generation. Given a query, your task is to create query terms to retrieve the most relevant products, ensuring they best meet customer needs.} \\[0.5em]
\texttt{Below is the query:} \\
\texttt{\textasciigrave\textasciigrave\textasciigrave} \textit{\{user\_query\}} \texttt{\textasciigrave\textasciigrave\textasciigrave} \\[0.5em]
\texttt{Show your work in <think>\textbackslash think> tags. Your final response must be in JSON format within <answer>\textbackslash answer> tags. The generated query should use Boolean operators (AND, OR) to structure your query logically. For example,} \\
\texttt{<answer>} \\
\texttt{\{ "query": xxx \}} \\
\texttt{</answer>.} \\
\texttt{<|im\_end|>} \\
\texttt{<|im\_start|>assistant} \\
\texttt{Let me solve this step by step.} \\
\texttt{<think>} \\
\bottomrule
\end{tabular}
\label{tab:prompt-bm25}
\end{table}

\begin{table}[t]
\centering
\caption{Prompt used in \ourmethod for product search tasks with dense retrievers, where the LLM expands the original query with semantically relevant information to improve retrieval.}
\begin{tabular}{p{0.95\linewidth}}
\toprule
\textbf{Prompt Template for \ourmethod + Dense Retriever (Product Search)} \\
\midrule
\texttt{<|im\_start|>system} \\
\texttt{You are a helpful AI assistant. You first think about the reasoning process in the mind and then provide the user with the answer.} \\
\texttt{<|im\_end|>} \\
\texttt{<|im\_start|>user} \\
\texttt{You are an expert in generating queries for dense retrieval. Given a customer query, your task is to retain the original query while expanding it with additional semantically relevant information, retrieve the most relevant products, ensuring they best meet customer needs. If no useful expansion is needed, return the original query as is.} \\[0.5em]
\texttt{Below is the query:} \\
\texttt{\textasciigrave\textasciigrave\textasciigrave} \textit{\{user\_query\}} \texttt{\textasciigrave\textasciigrave\textasciigrave} \\[0.5em]
\texttt{Show your work in <think>\textbackslash think> tags. Your final response must be in JSON format within <answer>\textbackslash answer> tags. For example,} \\
\texttt{<answer>} \\
\texttt{\{ "query": xxx \}} \\
\texttt{</answer>.} \\
\texttt{<|im\_end|>} \\
\texttt{<|im\_start|>assistant} \\
\texttt{Let me solve this step by step.} \\
\texttt{<think>} \\
\bottomrule
\end{tabular}
\label{tab:prompt-dense}
\end{table}

\begin{table}[ht]
\centering
\caption{Prompt used in \ourmethod for Amazon-C4 dense retrieval with BLAIR, where the LLM rewrites queries into review-style texts aligned with BLAIR's pretraining objective.}
\begin{tabular}{p{0.95\linewidth}}
\toprule
\textbf{Prompt Template for \ourmethod + Dense Retriever (Amazon-C4, Review-Style Rewriting)} \\
\midrule
\texttt{<|im\_start|>system} \\
\texttt{You are a helpful AI assistant. You first think about the reasoning process in the mind and then provide the user with the answer.} \\
\texttt{<|im\_end|>} \\
\texttt{<|im\_start|>user} \\
\texttt{You are an expert in query rewriting for dense retrieval systems. Rewrite the following product search query as if you are a real customer writing a natural, authentic review after using the product. Maintain the meaning and details of the original query, but shift the tone to be more casual, emotional, and based on personal experience. Include specific comments about product performance that match the query's intent.} \\[0.5em]
\texttt{\# Below is the product search query:} \\
\texttt{\# \textasciigrave\textasciigrave\textasciigrave \{user\_query\} \textasciigrave\textasciigrave\textasciigrave} \\[0.5em]
\texttt{Show your work in <think> </think> tags. Your final response must be in JSON format within <answer> </answer> tags. For example,} \\
\texttt{<answer>} \\
\texttt{\{ "query": xxx \}} \\
\texttt{</answer>.} \\
\texttt{<|im\_end|>} \\
\texttt{<|im\_start|>assistant} \\
\texttt{Let me solve this step by step.} \\
\texttt{<think>} \\
\bottomrule
\end{tabular}
\label{tab:prompt-review-style}
\end{table}

\subsubsection{Additional Analysis: The Role of Prompt Design and Exploration in RL}
\begin{table}[t]
    \centering
    \caption{Performance comparison between general query rewriting and review-style query rewriting under BLAIR-Base and BLAIR-Large. We observe that while prompting alone offers marginal or negative gains, \ourmethod achieves significant improvements—especially when aligned with the inductive biases of dense retrievers (e.g., BLAIR pre-trained on review-style text). }
    \resizebox{\linewidth}{!}{
    \begin{tabular}{lcccc}
        \toprule
        Model & Video Games & Baby Products & Office Products & Sports and Outdoors \\
        \midrule
        \textbf{Base retriever}\\
        BLAIR$_{\text{\textsc{BASE}}}$ & 19.14 & 19.53 & 17.43 & 20.02 \\ 
        BLAIR$_{\text{\textsc{LARGE}}}$ &  \ubold{24.86} & \ubold{22.44} & 18.92 & \ubold{24.54} \\ 
        \midrule
        \textbf{Using general query rewriting prompts}\\
        GPT-4o$_\text{+BLAIR-BASE}$ & 21.12 & 19.54 & 17.22 & {22.68}\\
        GPT-4o$_\text{+BLAIR-LARGE}$ & \ubold{23.40} & 21.00 & 18.69 & \ubold{25.74}\\
        Qwen-2.5-3B-Instruct$_\text{+BLAIR-BASE}$ & 15.82 & 18.07 & 14.34 & 16.96\\
        Qwen-2.5-3B-Instruct$_\text{+BLAIR-LARGE}$ & 18.20 & 19.19 & 15.37 & 18.94 \\
        $\text{\ourmethod-3B}_\text{+BLAIR-BASE}$ & 19.65 & 20.85 & 18.91 & 22.29\\
        $\text{\ourmethod-3B}_\text{+BLAIR-LARGE}$ & 19.26 & 21.63 & 18.93 & 21.64\\
        \midrule
        \textbf{Models convert queries into reviews} \\
        GPT-4o$_\text{+BLAIR-BASE}$ & 20.61 & 19.35  & 17.63 & 21.74 \\
        GPT-4o$_\text{+BLAIR-LARGE}$ & 16.59 & 15.14 & 15.05 & 16.76\\
        Qwen-2.5-3B-Instruct$_\text{+BLAIR-BASE}$ & 19.40 & 16.73 & 15.59 & 18.50\\
        Qwen-2.5-3B-Instruct$_\text{+BLAIR-LARGE}$ & 22.06  & 18.31& 16.51 & 20.65\\
        $\text{\ourmethod-3B}_\text{+BLAIR-BASE}$ & 21.69 & \ubold{25.62} & \ubold{22.17} & 24.22\\
        $\text{\ourmethod-3B}_\text{+BLAIR-LARGE}$ & \highlight{26.51} & \highlight{27.04} & \highlight{23.10} & \highlight{27.40}\\
        \bottomrule
    \end{tabular}
    }
    \label{tab:query-style-analysis}
\end{table}
In this section, we investigate how different prompt strategies affect retrieval performance when paired with dense retrievers—particularly BLAIR, which is pretrained using user reviews and item metadata through contrastive learning. Our hypothesis is that aligning the generation style of the query rewriting process with the pretraining distribution of the retriever could lead to better synergy and downstream performance.

As shown in Table~\ref{tab:query-style-analysis}, using generic prompts (Table {\ref{tab:prompt-dense}}) to rewrite queries into natural language variations yields limited or even negative impact across all models, including GPT-4o and Qwen-2.5-3B-Instruct. Moreover, initializing \ourmethod with such rewriting behavior also results in modest performance.

To address this, we experimented with a more targeted prompt strategy: instructing the model to convert the input query into a user-style review (Table \ref{tab:prompt-review-style}), which better mirrors the training data format used by BLAIR. Interestingly, without reinforcement learning, this “review-style rewriting” strategy alone often hurts performance. However, when training \ourmethod under this revised prompting strategy, we observe that the model gradually learns to generate review-style queries that significantly enhance retrieval performance. \ourmethod not only recovers from the initially degraded performance but also surpasses all baselines across all four domains, achieving new state-of-the-art results with both BLAIR-BASE and BLAIR-LARGE.

This experiment also sheds light on the importance of guided exploration in reinforcement learning, especially in language generation tasks with extremely large action spaces. In our setting, the LLM selects an action at each token position, and the final output is a long sequence—meaning the search space is exponentially large. Without meaningful initial guidance, for those very hard problems, early exploration can easily fall into suboptimal regions, from which recovery is difficult due to sparse or misleading reward signals.

\subsubsection{Comparison with Other Fine-Tuning Strategies: Rejection Sampling and DPO}

To better understand the strengths of our method, we compared \ourmethod against two widely used fine-tuning strategies—Rejection Sampling Fine-Tuning and Direct Preference Optimization (DPO)—within the ESCI + BM25 setting. Table \ref{tab:app:dpo} summarizes the NDCG performance across four product domains.

\begin{table}[h]
\centering
\caption{Comparison of NDCG performance across product domains using different fine-tuning strategies under the ESCI + BM25 setup.}
\resizebox{\linewidth}{!}{
\begin{tabular}{lcccc}
\toprule
\textbf{Model} & \textbf{Video Games} & \textbf{Baby Products} & \textbf{Office Products} & \textbf{Sports and Outdoors} \\
\midrule
Qwen-2.5-3B-Rej-Sample & 22.92 & 20.67 & 27.01 & 24.89 \\
Qwen-2.5-3B-DPO        & 14.53 & 12.57 & 13.52 & 13.32 \\
\text{\ourmethod-3B} & \textbf{33.89} & \textbf{29.27} & \textbf{34.61} & \textbf{31.92} \\
\bottomrule
\end{tabular}
}
\label{tab:app:dpo}
\end{table}

\textit{Rejection Sampling Fine-Tuning} involved filtering training examples where outputs from top-performing LLMs (GPT-4o, Claude-3.5-Sonnet, Claude-3-Haiku) achieved an NDCG greater than 0.5. These examples were then used to fine-tune the Qwen-2.5-3B-Instruct model.

\textit{DPO} was trained by selecting, for each training instance, the best and worst candidate queries (by NDCG) from the same LLM pool. The model was fine-tuned for one epoch using these pairs. However, the results were substantially lower across all domains.

From the results, we can observe a clear performance gap between \ourmethod and the two baseline fine-tuning methods. While DPO provides a theoretically grounded approach to preference learning, its reliance on static preference pairs—without any feedback from the downstream task environment—limits its generalization ability. The poor performance of DPO across all domains suggests that, in complex ranking scenarios, such offline preference supervision may be insufficient and prone to overfitting.

Rejection sampling achieves moderately better results. However, this approach still lacks a mechanism to iteratively refine the policy based on task-specific feedback. In contrast, \ourmethod benefits from a closed-loop optimization process, where model updates are continually guided by performance within the recommendation environment.

\subsubsection{Applicability to Alternative Backbone Models}

To evaluate the generalizability of \ourmethod, we conducted additional experiments using a variety of backbone models beyond Qwen-2.5-3B. Specifically, we tested \ourmethod on smaller models such as Qwen-2.5-0.5B and Qwen-2.5-1.5B, as well as on a different model family—LLaMA-3.2-3B. All experiments were performed under the same ESCI + BM25 setting, with results presented in Table~\ref{tab:backbone_comparison}.

The results show a consistent and significant performance boost when applying \ourmethod across all backbone sizes and architectures. For instance, \ourmethod-0.5B achieves an average improvement of over 13 points in NDCG compared to the base Qwen-2.5-0.5B model, outperforming even some larger models. Similarly, REC-R1 applied to LLaMA-3.2-3B yields results comparable to or better than its Qwen-based counterpart, despite architectural differences.

\begin{table}[ht]
\centering
\caption{NDCG performance of REC-R1 across different backbone models under the ESCI + BM25 setting.}
\resizebox{\linewidth}{!}{
\begin{tabular}{lcccc}
\toprule
\textbf{Model} & \textbf{Video Games} & \textbf{Baby Products} & \textbf{Office Products} & \textbf{Sports and Outdoors} \\
\midrule
Qwen-2.5-0.5B       & 10.88 & 12.93 & 20.21 & 17.09 \\
\ourmethod-0.5B         & 27.18 & 25.14 & 34.80 & 30.70 \\
\midrule
Qwen-2.5-1.5B       & 15.14 & 14.09 & 21.77 & 15.84 \\
\ourmethod-1.5B         & 31.68 & 26.46 & 32.92 & 30.87 \\
\midrule
LLaMA-3.2-3B        & 19.16 & 16.87 & 19.35 & 16.90 \\
\ourmethod-3B (LLaMA)   & 32.41 & 29.40 & 34.26 & 31.39 \\
\bottomrule
\end{tabular}
}
\label{tab:backbone_comparison}
\end{table}

\subsubsection{Comparison with Modern Strong Query Rewriting Baselines}
We further compared \ourmethod with several recent and strong LLM-assisted query rewriting and expansion baselines on both the ESCI dataset and the Amazon-C4 corpus with BM25 retrieval. The compared baselines include DocT5Query \citep{nogueira2019doc2query}, RetPO \citep{yoon2025ask}, LLM4CS \citep{mao2023large}, and AdaQR \citep{zhang2024adaptive}. Results are summarized in Tables~\ref{tab:esci-qr} and \ref{tab:amazonc4-qr}.

On ESCI, \ourmethod achieves substantial gains over all baselines across the four domains. On Amazon-C4, the performance gap is even more striking. \ourmethod surpasses all competing approaches by large margins. Overall, these results clearly demonstrate that \ourmethod consistently outperforms modern strong LLM-assisted query rewriting baselines.

\begin{table}[t]
    \centering
    \caption{{Comparison with modern query rewriting baselines on the {ESCI + BM25} setting}. We report NDCG@100 across four domains. The best performance is denoted in \highlight{bold}.}
    \resizebox{0.9\linewidth}{!}{
    \begin{tabular}{lcccc}
        \toprule
        \textbf{Model} & \textbf{Video Games} & \textbf{Baby Products} & \textbf{Office Products} & \textbf{Sports and Outdoors} \\
        \midrule
        DocT5Query & 15.02 & 14.37 & 17.19 & 16.83 \\
        RetPO & 23.43 & 17.88 & 19.00 & 20.80 \\
        LLM4CS & 23.29 & {24.02} & 28.06 & 28.00 \\
        AdaQR & {23.53} & 20.59 & {29.02} & {26.36} \\
        \midrule
        \ourmethod & \highlight{33.89} & \highlight{29.27} & \highlight{34.61} & \highlight{31.92} \\
        \bottomrule
    \end{tabular}
    }
    \label{tab:esci-qr}
\end{table}

\begin{table}[t]
    \centering
    \caption{{Comparison with modern query rewriting baselines on {Amazon-C4 + BM25}}. We report NDCG@100 across four domains. The best performance is denoted in \highlight{bold}.}
    \resizebox{0.9\linewidth}{!}{
    \begin{tabular}{lcccc}
        \toprule
        \textbf{Model} & \textbf{Video Games} & \textbf{Baby Products} & \textbf{Office Products} & \textbf{Sports and Outdoors} \\
        \midrule
        DocT5Query & 3.90 & 5.81 & 4.93 & 6.43 \\
        RetPO & 5.20 & 1.40 & 1.64 & 2.54 \\
        LLM4CS & {13.05} & {9.97} & {11.51} & {11.05} \\
        AdaQR & 12.98 & 9.15 & 10.47 & 10.41 \\
        \midrule
        \ourmethod & \highlight{26.51} & \highlight{27.04} & \highlight{23.10} & \highlight{27.40} \\
        \bottomrule
    \end{tabular}
    }
    \label{tab:amazonc4-qr}
\end{table}

\subsection{Sequential Recommendation}
\subsubsection{Dataset Details}
We conduct our sequential recommendation experiments on the Amazon Beauty dataset curated by \citet{hou2024bridging}. Following their protocol, we split the data chronologically based on absolute timestamps into training, validation, and test sets. The final splits include 96,778 training samples, 3,538 validation samples, and 1,538 test samples—comprising 1,000 inductive and 538 transductive test cases. All experiments use the Amazon Beauty item pool containing 43,982 unique products, consistent with the setting in \citet{hou2024bridging}.

\subsubsection{LLM Input Construction for Sequential Recommendation}
To adapt the sequential recommendation task for use with generative LLMs, we convert each user's interaction history into a natural language format. Specifically, for each historical item, we concatenate the titles using newline characters (\texttt{\textbackslash n}) as separators. To ensure compatibility with the LLM’s input length limit (set to 512 tokens in our experiments), we retain only the latest 10 items from the history list. The resulting text sequence serves as the context for generation.

This processed history text is then formatted into a prompt for the LLM to generate a guess of the next item the user might want. An example of this prompt format is shown in Table~\ref{tab:prompt-example-seqrec}.

\begin{table}[ht]
\centering
\caption{Prompt format used for LLM-based generation in the sequential recommendation task. The input includes the user’s purchase history and instructs the model to output structured query terms for the next likely purchase.}
\begin{tabular}{p{0.95\linewidth}}
\toprule
\textbf{Prompt Template Used for LLM Input in Sequential Recommendation} \\
\midrule
\texttt{<|im\_start|>system} \\
\texttt{You are a helpful AI assistant. You first think about the reasoning process in the mind and then provide the user with the answer.} \\
\texttt{<|im\_end|>} \\
\texttt{<|im\_start|>user} \\
\texttt{You are an intelligent shopping assistant that helps predict what users may want to purchase next. Below is a list of items a user has purchased recently. Your task is to infer one or multiple kinds of products they may want to buy next, and generate relevant query terms that can be used to search for these potential products.} \\[0.5em]
\texttt{Below is the user purchase history:} \\
\texttt{\textasciigrave\textasciigrave\textasciigrave} \textit{\{purchase\_history\}} \texttt{\textasciigrave\textasciigrave\textasciigrave} \\[0.5em]
\texttt{Show your work in <think>\textbackslash think> tags. Your final response must be in JSON format within <answer>\textbackslash answer> tags. The generated query should use Boolean operators (AND, OR) to structure your query logically. For example,} \\
\texttt{<answer>} \\
\texttt{\{ "query": xxx \}} \\
\texttt{</answer>.} \\
\texttt{<|im\_end|>} \\
\texttt{<|im\_start|>assistant} \\
\texttt{Let me solve this step by step.} \\
\texttt{<think>} \\
\bottomrule
\end{tabular}
\label{tab:prompt-example-seqrec}
\end{table}

\subsubsection{Implementation Details}
The training setup for the sequential recommendation task largely mirrors that of product search, including the use of the VeRL library and two NVIDIA A100 80GB GPUs. One notable difference lies in the input length configuration: we set max prompt length to 512 (instead of 256), since each input includes a full user history list composed of multiple product titles, which tends to be significantly longer than single-turn queries in product search. All other hyperparameters remain unchanged.

\subsubsection{Additional Analysis: Why \ourmethod Performs Better in the Inductive Setting?}
\label{app:seq_rec_analysis}
Notably, we find that \ourmethod achieves stronger performance in the inductive setting compared to the transductive one—despite the latter having more item overlap with training data. This may seem counterintuitive at first, but we believe the reason lies in the nature of our framework and the task formulation.

In the transductive setting, many test items have already appeared in the training data. Traditional models can exploit this overlap through direct memorization or overfitting to item co-occurrence patterns. However, in \ourmethod, the LLM is trained to infer the next item via natural language generation, which requires capturing underlying intent or semantics. When the task itself lacks a strong logical mapping from history to future items—as is often the case in sequential recommendation—language-based reasoning becomes less effective and may even introduce noise.

In contrast, the inductive setting removes such memorization shortcuts, as the target items are completely unseen during training. This forces the model to rely on more transferable semantic patterns, which better aligns with \ourmethod’s learning mechanism. The LLM is incentivized to produce generalized, meaningful descriptions that reflect what kind of item could come next—rather than relying on item identity. As a result, the inductive setting provides a clearer signal for reward-driven optimization, enabling \ourmethod to shine where conventional models struggle.

\subsection{Product Re-ranking}
\label{appendix:rerank_details}
\subsubsection{Dataset Details}
We build the re-ranking dataset starting from the ESCI corpus curated by \citet{hou2024bridging}. For each split (train/validation/test), we first obtain initial candidate lists using the Rec-R1-Retriever trained in the Product Search task (Section~\ref{sec:prod_search}), combined with BM25 retrieval. We select the top-16 items as the candidates. To ensure data quality, we filter out samples with zero retrieval performance, i.e., those for which the upstream retriever achieves NDCG@20 = 0. After filtering, each data point consists of the user query, the candidate list (input to the re-ranker), and the corresponding ground-truth relevant items. We finally obtain 2,166, 521, and 465 samples for the train, validation, and test splits, respectively.

\subsubsection{Implementation Details}
The overall training and evaluation setup follows the same configuration as in the product search and sequential recommendation tasks. We highlight the key differences below: (1) the LLM prompts are specifically designed for the re-ranking scenario, as shown in Table~\ref{tab:prompt-rerank}; and (2) the input/output length limits are adjusted, with max prompt length set to 3000 and max response length set to 1024.

\begin{table}[t]
\centering
\caption{Prompt used for the product re-ranking task.}
\begin{tabular}{p{0.95\linewidth}}
\toprule
\textbf{Prompt Template for \ourmethod + Product Re-ranking} \\
\midrule
\texttt{<|im\_start|>system} \\
\texttt{You are a helpful AI assistant. You first think about the reasoning process in the mind and then provide the user with the answer.} \\
\texttt{<|im\_end|>} \\
\texttt{<|im\_start|>user} \\
\texttt{You are an expert in product reranking. Given a customer query and a list of candidate products, your task is to rerank the products so that the most relevant items to the query are placed at the top. Consider semantic meaning, product attributes, and customer intent.} \\[0.5em]
\texttt{Below are the candidate products (item\_id and metadata):} \\
\texttt{\# \textasciigrave\textasciigrave\textasciigrave \{candidates\} \textasciigrave\textasciigrave\textasciigrave} \\[0.5em]
\texttt{Below is the query:} \\
\texttt{\# \textasciigrave\textasciigrave\textasciigrave \{user\_query\} \textasciigrave\textasciigrave\textasciigrave} \\
\rule{0pt}{2.5ex}\texttt{Show your work in <think> </think> tags. Your final response must be in JSON format within <answer> </answer> tags. The output JSON should contain the reranked list of item\_ids in order of relevance. The length of the reranked list should match the number of input candidates. For example,} \\
\texttt{<answer>} \\
\texttt{\{ "reranked\_items": ["Item\_4", "Item\_8", "Item\_x", ..., all candidate item\_ids in order] \}} \\
\texttt{</answer>} \\
\texttt{<|im\_end|>} \\
\texttt{<|im\_start|>assistant} \\
\texttt{Let me solve this step by step.} \\
\texttt{<think>} \\
\bottomrule
\end{tabular}
\label{tab:prompt-rerank}
\end{table}

\subsection{Evaluation of Generalization and Forgetting}
\subsubsection{Implementation Details}
To assess whether \ourmethod preserves the general-purpose capabilities of the underlying LLM while achieving strong task-specific performance, we evaluate all models across a suite of generalization benchmarks. Specifically, we consider six datasets spanning different task types and reasoning skills:
\begin{itemize}
    \item \textbf{ESCI (NDCG@100)} – Product search recommendation, serving as the target task of optimization.
    \item \textbf{MMLU (Accuracy)} – A factual knowledge benchmark covering multiple-choice questions across various domains \citep{hendrycks2020measuring}.
    \item \textbf{IFEval (Strict Accuracy)} – A benchmark designed to evaluate instruction-following and alignment with user intent \citep{zhou2023instruction}.
    \item \textbf{GSM8K (5-shot, EM)} – Math reasoning with elementary school word problems in a few-shot setting, measured by exact match \citep{cobbe2021training}.
    \item \textbf{MBPP (3-shot, pass@1)} – A coding benchmark consisting of short Python problems, evaluated using pass@1 \citep{austin2021program}.
    \item \textbf{HumanEval (0-shot, pass@1)} – A high-quality Python programming test measuring zero-shot code generation performance \citep{chen2021evaluating}.
\end{itemize}

All evaluations are conducted using the \texttt{lm-evaluation-harness} library \citep{eval-harness} to ensure consistency and reproducibility. For ESCI, we directly compute NDCG@100 based on model-generated rewritings. For all other datasets, we use the official protocols defined in \texttt{lmeval}.

\subsubsection{Additional Analysis: Impact of Reasoning and JSON Format in SFT} \label{app:json_reasoning_ablation}
To better understand the effects of prompt format on supervised fine-tuning (SFT), we explore four SFT variants that differ in whether the training data includes intermediate reasoning steps and whether the answers are wrapped in structured JSON format. We use GPT-4o-generated data on the ESCI product search task for training all four SFT models. Table~\ref{tab:esci_results} shows results on ESCI, and Table~\ref{tab:general_eval} evaluates generalization to broader benchmarks.

On the task-specific ESCI dataset, all SFT variants outperform the base model (Qwen-2.5-3B-Instruct), demonstrating the effectiveness of supervised fine-tuning on task-specific data. However, all variants fall short compared to \ourmethod, which uses the same data but trains via reinforcement learning. This highlights the advantage of reward-driven learning in aligning with downstream task metrics.

We then assess the general-purpose capabilities of these models. On MMLU, a knowledge-intensive benchmark, all SFT variants retain performance close to the original model (within 2 points), suggesting factual knowledge is preserved. In contrast, IFEval results reveal \textbf{catastrophic forgetting across the board—all SFT variants} suffer 20–30 point drops in instruction-following accuracy, regardless of format. This underscores the risk of overfitting in SFT, where tuning on narrow task data compromises broader generalization.

An interesting observation arises on GSM8K: the variant with JSON formatting but no reasoning shows improved performance over the base model (+4.7). We hypothesize that the strict output format (JSON) acts as a “shield,” isolating the fine-tuning effects from interfering with the model’s native reasoning process. In contrast, the reasoning-heavy variants modify the generative behavior more substantially, harming out-of-domain reasoning.

On the coding benchmarks (MBPP, HumanEval), all four SFT variants exhibit comparable or slightly improved performance relative to the original model—regardless of whether the training outputs used JSON format. This suggests that coding ability is relatively robust to task-specific SFT, and may even benefit from it. One possible explanation is that the ESCI task, although unrelated to coding, implicitly encourages structured generation and logical formatting (e.g., conditionally constructed queries or Boolean expressions), which aligns well with the formal nature of code generation.

In contrast, \ourmethod avoids these trade-offs altogether. \textbf{\ourmethod improves task-specific performance while maintaining general capabilities across reasoning, knowledge, and code generation. }These results provide further evidence that \ourmethod is a more stable and generalizable learning framework than conventional SFT.

\begin{table}[t]
    \centering
    \caption{Performance on ESCI datasets across different domains. Note: The \ourmethod results here may differ slightly from Table~\ref{tab:esci-results} due to using the checkpoint at step 1400 instead of step 1390. This minor difference has negligible impact on performance. We report the performance of models under different training steps because evaluation on validation and test set is done every 10 steps and models are saved every 100 steps. We use the 1400-step checkpoint for consistency in follow-up experiments.}
    \resizebox{\linewidth}{!}{
    \begin{tabular}{lcccccc}
    \toprule
     Model & Reasoning & JSON & ESCI (Games) & ESCI (Baby) & ESCI (Office) & ESCI (Sports) \\
     \midrule
     Qwen-2.5-3B-Instruct & - & - & 19.63 & 16.03 & 19.96 & 21.36 \\
     \midrule
     Qwen-2.5-3B-Instruct (SFT) & \cmark & \cmark & 25.70 & 19.66 & 27.34 & 24.82 \\
     Qwen-2.5-3B-Instruct (SFT) & \xmark & \cmark & 26.87 & 22.83 & 26.10 & 26.42 \\
     Qwen-2.5-3B-Instruct (SFT) & \xmark & \xmark & 25.08 & 21.50 & 28.75 & 25.18 \\
     Qwen-2.5-3B-Instruct (SFT) & \cmark & \xmark & 23.31 & 20.77 & 24.34 & 24.78 \\
     \midrule
     $\text{\ourmethod-3B}$ (1400 steps)  & - & - & 32.92 & 29.62 & 35.05 & 31.21 \\
     \bottomrule
    \end{tabular}
    }
    \label{tab:esci_results}
\end{table}

\begin{table}[t]
    \centering
    \caption{Performance on general-purpose reasoning and coding benchmarks. Color-coded deltas show change from baseline Qwen-2.5-3B-Instruct.}
    \resizebox{\linewidth}{!}{
    \begin{tabular}{lccccccc}
    \toprule
     Model & Reasoning & JSON & MMLU & IFEval & GSM8K & MBPP & HumanEval  \\
     \midrule
     Qwen-2.5-3B-Instruct & - & - & 65.4 & 58.2 & 63.4 & 53.6 & 46.3 \\
     \midrule
     Qwen-2.5-3B-Instruct (SFT) & \cmark & \cmark & 63.7 \textcolor{red}{(-1.7)} & 31.4 \textcolor{red}{(-26.8)} & 35.7 \textcolor{red}{(-27.7)} & 54.8 \textcolor{blue}{(+1.2)} & 53.6 \textcolor{blue}{(+7.3)} \\
     Qwen-2.5-3B-Instruct (SFT) & \xmark & \cmark & 64.1 \textcolor{red}{(-1.3)} & 30.5 \textcolor{red}{(-27.7)} & 68.1 \textcolor{blue}{(+4.7)} & 50.4 \textcolor{red}{(-3.2)} & 57.3 \textcolor{blue}{(+11.0)} \\
     Qwen-2.5-3B-Instruct (SFT) & \xmark & \xmark & 64.3 \textcolor{red}{(-1.1)} & 34.4 \textcolor{red}{(-23.8)} & 37.3 \textcolor{red}{(-26.1)} & 52.2 \textcolor{red}{(-1.4)} & 53.6 \textcolor{blue}{(+7.3)} \\
     Qwen-2.5-3B-Instruct (SFT) & \cmark & \xmark & 63.6 \textcolor{red}{(-1.8)} & 35.0 \textcolor{red}{(-23.2)} & 50.2 \textcolor{red}{(-13.2)} & 55.4 \textcolor{blue}{(+1.8)} & 52.4 \textcolor{blue}{(+6.1)} \\
     \midrule
     $\text{\ourmethod-3B}$ (1400 steps) & - & - & 65.3 \textcolor{red}{(-0.1)} & 60.1 \textcolor{blue}{(+1.9)} & 69.1 \textcolor{blue}{(+5.7)} & 54.4 \textcolor{blue}{(+0.8)} & 46.3 (+0.0) \\
     \bottomrule
    \end{tabular}
    }
    \label{tab:general_eval}
\end{table}

\subsection{Evaluations on Standard IR Benchmarks}
To further examine the generality of our framework, we conducted evaluations on several widely used information retrieval (IR) benchmarks, including MS MARCO \citep{nguyen2640ms}, two representative datasets from BEIR \citep{thakur2021beir} (NFCorpus and FEVER), as well as the TREC Deep Learning tracks DL’19 and DL’20 \citep{dai2024cocktail}. Results are summarized in Table~\ref{tab:ir-results}. Across all benchmarks, \ourmethod consistently outperforms the baselines. These results further validate the effectiveness and robustness of our framework under standard IR benchmarks, even though retrieval is not the primary focus of this paper.

\begin{table}[t]
    \centering
    \caption{\textbf{Performance on standard IR benchmarks.} We report MRR@10 for MS MARCO and NDCG@10 for BEIR and TREC DL datasets. The best performance is denoted in \highlight{bold}.}
    \resizebox{!}{!}{
    \begin{tabular}{lccccc}
        \toprule
        Model & MS MARCO & NFCorpus & FEVER & DL’19 & DL’20 \\
        \midrule
        BM25 & 44.80 & 14.67 & 44.25 & 42.31 & 47.80 \\
        Qwen2.5-3B-Instruct & 23.00 & 29.23 & 46.68 & 43.56 & 52.24 \\
        GPT-4o & 39.43 & 30.67 & 53.94 & 59.87 & 54.72 \\
        \midrule
        \textbf{Rec-R1} & \highlight{53.13} & \highlight{34.00} & \highlight{66.38} & \highlight{63.25} & \highlight{59.71} \\
        \bottomrule
    \end{tabular}
    }
    \label{tab:ir-results}
    \vspace{-1em}
\end{table}

\subsection{Comparison with RL-for-Search Approaches}
Recent studies have also applied reinforcement learning with verifiable reward (RLVR) for search. We highlight two representative approaches—Search-R1 \citep{jin2025search} and ConvSearch-R1 \citep{zhu2025convsearch}—and compare them against our proposed Rec-R1.

\textbf{Search-R1.} As described in Section 3.4 of their paper, Search-R1 relies solely on a single exact-matching reward. In contrast, Rec-R1 directly optimizes downstream recommendation metrics (e.g., Recall, NDCG) as the reward function. This distinction is crucial: due to the nature of RLVR optimization, Search-R1 cannot explicitly constrain the quality of the intermediate retrieval process, while Rec-R1 is directly aligned with the final RecSys evaluation objectives. As shown in Tables~\ref{tab:rl-esci} and~\ref{tab:rl-amazon}, Rec-R1 consistently and significantly outperforms Search-R1 across all domains in both ESCI and Amazon-C4 datasets.

\textbf{ConvSearch-R1.} ConvSearch-R1 differs in its reward design: instead of directly optimizing evaluation metrics, it introduces a piecewise-defined reward function (Eq.~(1) in their paper). Their motivation is to mitigate reward sparsity, arguing that optimizing NDCG@3 directly would result in too few positive signals. However, we argue that such sparsity is only a problem if one insists on optimizing at a very small cutoff (e.g., $k=3$). In fact, optimizing at larger $k$ implicitly improves smaller-$k$ NDCG as well. Moreover, ConvSearch-R1’s reward formulation does not directly align with evaluation metrics. Empirically, Rec-R1 consistently outperforms ConvSearch-R1 on both ESCI and Amazon-C4 (Tables~\ref{tab:rl-esci} and~\ref{tab:rl-amazon}). 

In summary, Rec-R1 distinguishes itself by (i) reward design that is directly tied to RecSys evaluation metrics, and (ii) consistently stronger empirical performance on LLM4Rec tasks compared to recent RL-for-search baselines.

\begin{table}[t]
    \centering
    \caption{\textbf{Comparison with RL-for-search approaches on ESCI + BM25 (NDCG@100).} Rec-R1 consistently outperforms both Search-R1 and ConvSearch-R1 across all domains.}
    \resizebox{0.85\linewidth}{!}{
    \begin{tabular}{lcccc}
        \toprule
        Model & Video Games & Baby Products & Office Products & Sports and Outdoors \\
        \midrule
        Search-R1 & 18.03 & 17.38 & 22.67 & 19.62 \\
        ConvSearch-R1 & 23.67 & 22.35 & 27.89 & 25.64 \\
        \textbf{Rec-R1} & \highlight{33.89} & \highlight{29.27} & \highlight{34.61} & \highlight{31.92} \\
        \bottomrule
    \end{tabular}
    }
    \label{tab:rl-esci}
    \vspace{-1em}
\end{table}

\begin{table}[t]
    \centering
    \caption{\textbf{Comparison with RL-for-search approaches on Amazon-C4 + BM25 (NDCG@100).} Rec-R1 achieves significant improvements across domains.}
    \resizebox{0.85\linewidth}{!}{
    \begin{tabular}{lcccc}
        \toprule
        Model & Video Games & Baby Products & Office Products & Sports and Outdoors \\
        \midrule
        Search-R1 & 12.06 & 9.12 & 10.21 & 9.96 \\
        ConvSearch-R1 & 13.57 & 13.08 & 13.62 & 13.34 \\
        \textbf{Rec-R1} & \highlight{26.51} & \highlight{27.04} & \highlight{23.10} & \highlight{27.40} \\
        \bottomrule
    \end{tabular}
    }
    \label{tab:rl-amazon}
    \vspace{-1em}
\end{table}

\subsection{Rec-R1 under Multi-Objective Reward Optimization}
{The reward structure in \ourmethod is designed to be general and flexible. Formally, the reward is defined as a function $f(a|s)$, where $s$ denotes the input state and $a$ the model output. This formulation naturally supports multi-objective optimization: $f(a|s)$ can be extended to incorporate multiple reward signals via linear combination, such that the optimization encourages trajectories with higher aggregated rewards.}

{To further validate the flexibility of our framework, we conducted additional experiments to examine whether Rec-R1 can be stably trained under a multi-reward setting. Specifically, we combined two types of rewards: (i) a \textit{downstream metric reward} (e.g., Recall or NDCG), which measures recommendation quality; (ii) a \textit{format reward}, which ensures well-structured and valid outputs; (iii) \textit{Category Consistency Reward}, which measures the proportion of retrieved items belonging to the same category as the ground-truth items given a query. The results are shown in Table~\ref{tab:multi-reward}. Rec-R1 continues to perform strongly across all domains, substantially outperforming baseline LLMs such as Qwen2.5-3B-Instruct and GPT-4o, while maintaining training stability.}

{These findings indicate that Rec-R1 is not limited to single-metric optimization but can be readily extended to multi-objective reward structures, offering a flexible framework for incorporating diverse recommendation signals.}

\begin{table}[t]
    \centering
    \caption{\textbf{Performance under multi-objective reward setting on ESCI + BM25 (NDCG@100).} Rec-R1 combines format reward, NDCG, and category consistency reward.}
    \resizebox{0.85\linewidth}{!}{
    \begin{tabular}{lcccc}
        \toprule
        Model & Video Games & Baby Products & Office Products & Sports and Outdoors \\
        \midrule
        Qwen2.5-3B-Instruct & 19.63 & 16.03 & 19.96 & 21.36 \\
        GPT-4o & 26.06 & 23.05 & 27.98 & 27.38 \\
        \textbf{Rec-R1 (multi-reward)} & \highlight{33.06} & \highlight{29.39} & \highlight{34.22} & \highlight{32.27} \\
        \bottomrule
    \end{tabular}
    }
    \label{tab:multi-reward}
    \vspace{-1em}
\end{table}

\section{Discussion and Future Directions}
\label{app:analysis}
In this section, we reflect on several insights and limitations emerging from our experiments, and highlight future directions for building stronger recommendation-oriented LLMs.

\paragraph{LLMs Can Learn to Recommend Without Access to the Item Space.}
In our product search experiments, \ourmethod operates without any access to the downstream item catalog—it only receives the user query $s$ and generates a rewritten query $a$, without knowing which products exist in the recommender’s database. Despite this apparent limitation, \ourmethod consistently delivers strong performance across domains. This aligns surprisingly well with human behavior: when people search for products, they rarely know the exact contents of a platform’s inventory. Instead, they refine their queries iteratively based on vague goals and system feedback. \ourmethod, trained in a closed loop with the recommender, learns this refinement process efficiently via reinforcement learning. This result highlights the potential of LLMs to simulate user-side reasoning, making them powerful agents for optimizing recommender interaction.

\paragraph{Toward Better Integration with Sequential Recommendation.}
Our sequential recommendation setup frames the LLM as a next-item predictor: it receives a history of product titles and generates a guess of the next likely item in natural language, which is then fed to a retriever. While this approach works well in the inductive setting—where test items are unseen—it underperforms traditional models in the transductive case. We have previously explain it in \S \ref{app:seq_rec_analysis}. A promising direction is to use LLMs not as generation agents, but as feature augmentation modules: the LLM could enrich each item in the user history with additional descriptive or contextual information. These enriched sequences could then be encoded using text encoders like BLAIR and passed into standard SRec models. This hybrid approach could combine LLMs’ semantic understanding with the modeling power of sequential architectures for the transductive setting.

\paragraph{Initialized LLM's capabilities Matter.}
Our findings underscore the importance of the base LLM's capabilities when applying reinforcement learning to complex decision-making tasks. Like traditional RL pipelines that rely on imitation learning to bootstrap strong initial behaviors—e.g., in high-stakes environments like Go and MOBA games \citep{silver2016mastering, vinyals2019grandmaster}—\ourmethod also benefits significantly from a well-initialized model. In our experiments, Qwen-2.5-3B-Instruct provides a strong general-purpose foundation. However, we also observe that this general strength does not guarantee effectiveness on every domain-specific task. For instance, in sequential recommendation, the base model lacks any prior experience in predicting the next item from user histories—resulting in a weak starting point for RL-based optimization. This raises a compelling direction: could domain-specific pretraining or instruction tuning (e.g., training LLMs to imitate existing sequential recommender outputs) better equip models to serve as \ourmethod agents? Combining domain-aware LLMs with \ourmethod could unlock more powerful and semantically aligned generation strategies, especially for tasks like sequential recommendation. We leave it as a future work.

\paragraph{Leveraging RecSys Feedback: From Static Logs to Live Interaction.}
A core advantage of \ourmethod lies in its ability to leverage feedback signals from recommendation systems. In our experiments, these signals are derived from historical interaction logs that are commonly available in large-scale recommender platforms. While the current feedback is log-based, this setup aligns well with real-world deployment, where user interactions are continuously collected in vast quantities. In practice, maintaining an up-to-date stream of logs allows \ourmethod to stay aligned with evolving user preferences and content trends. Moreover, \ourmethod is fully compatible with real-time feedback: it can be trained via online interactions with a live recommendation engine, where the LLM receives immediate performance signals (e.g., engagement rates or conversions). This makes \ourmethod a flexible framework capable of serving as a foundation for LLM-based recommendation systems that evolve with real-world usage.

\section{Case Study}
\label{app:case_study}
\begin{figure}
    \centering
    \includegraphics[width=\linewidth]{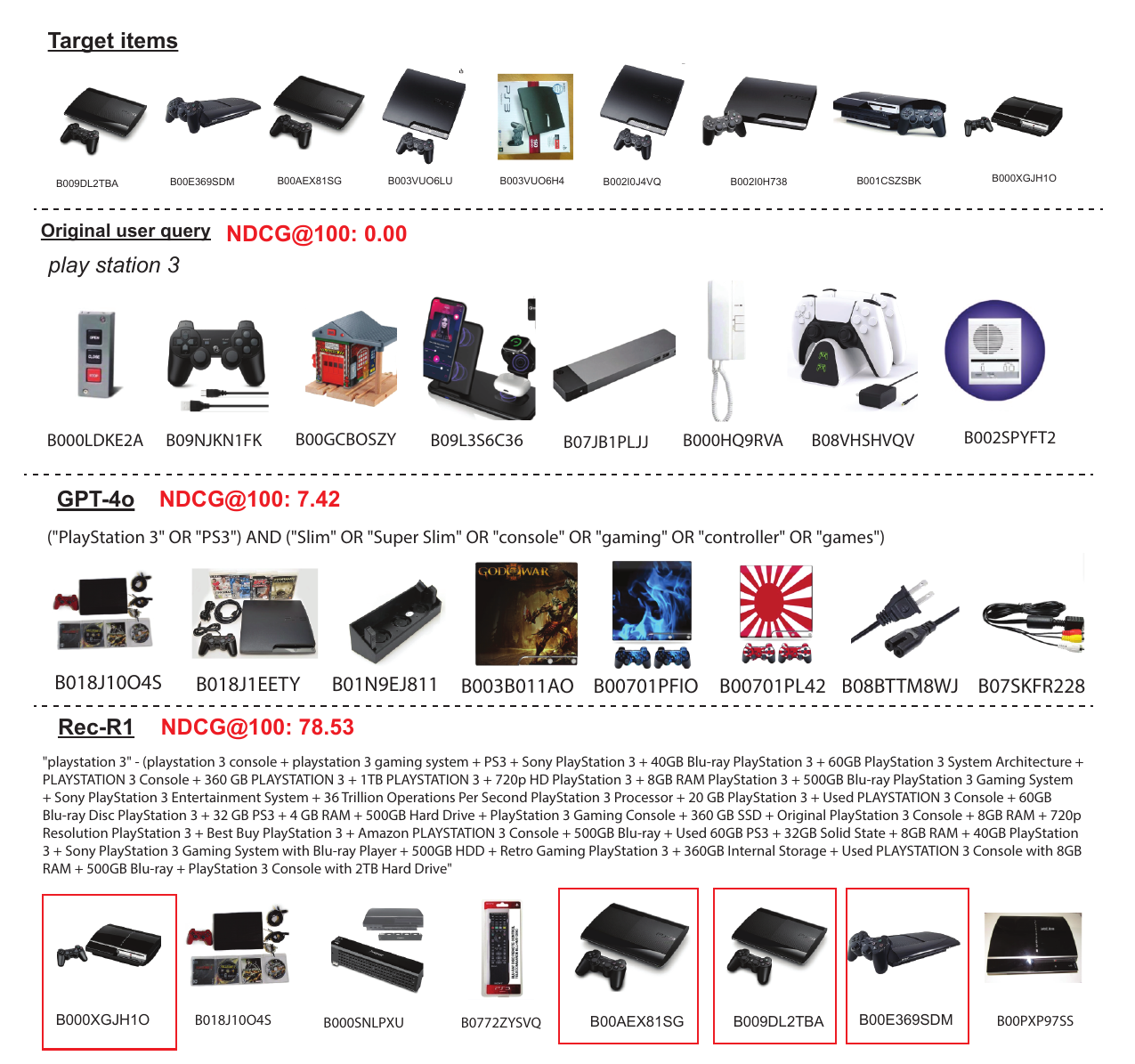}
\caption{Qualitative comparison of retrieval results on the ESCI \textbf{Video Games} domain. We visualize the top-8 items retrieved by BM25 using different query formulations: the original user query (\textit{play station 3}), a rewritten query by GPT-4o, and the output of our \ourmethod. Ground-truth relevant items are shown at the top. \ourmethod significantly improves NDCG@100 by generating a highly detailed and semantically rich query, enabling precise matching with relevant items. Items correctly retrieved (i.e., appearing in the target set) are highlighted with red bounding boxes.}
    \label{fig:esci_case}
\end{figure}

\begin{figure}[t]
    \centering
    \includegraphics[width=\linewidth]{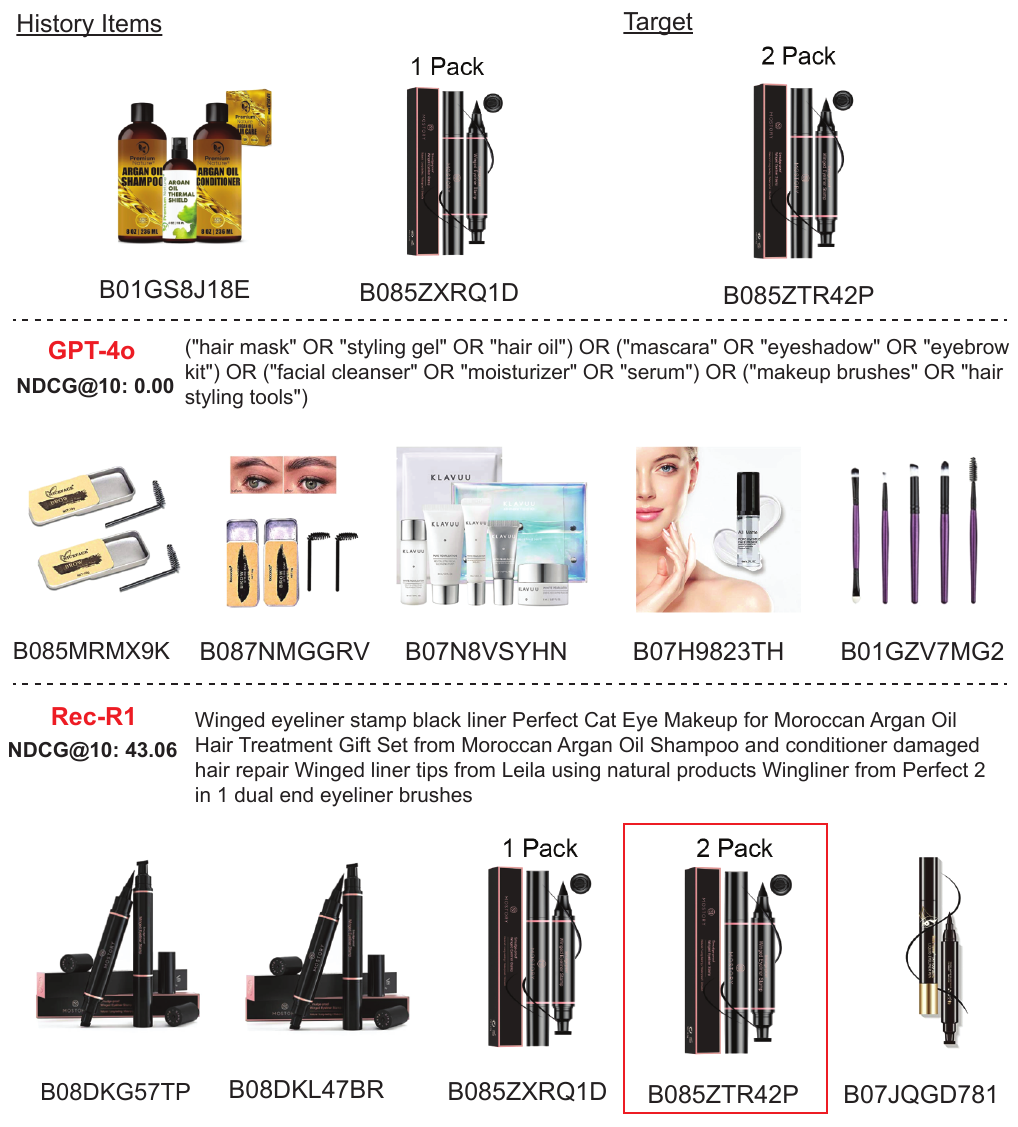}
\caption{Case study on the \textbf{Amazon Beauty} dataset for the sequential recommendation task. Given a user's historical purchase sequence (top-left), we compare the top-5 items retrieved by BM25 when queries are generated by GPT-4o (top) and \ourmethod (bottom). Ground-truth target items are shown at the top-right. \ourmethod produces a more contextually relevant and descriptive query, enabling accurate retrieval of both target items, significantly outperforming GPT-4o. Items correctly retrieved are highlighted with red bounding boxes.}
    \label{fig:amazon_review_case}
\end{figure}

To better understand the behavior and effectiveness of \ourmethod, we present qualitative case studies from the \textbf{product search} task (ESCI dataset) and the \textbf{sequential recommendation} task (Amazon Beauty dataset). These cases offer insights into how \ourmethod generates more effective textual inputs than prompting-based methods.

\subsection{Product Search: ESCI} Figure~\ref{fig:esci_case} illustrates a query rewriting example from the ESCI \textit{Video Games} domain. The user’s original query is simply "play station 3", which fails to retrieve any relevant items using BM25 (NDCG@100 = 0.00). When using GPT-4o for rewriting, the generated Boolean query covers many relevant keywords (e.g., “Slim”, “controller”, “games”) but lacks grounding in product-specific terminology, leading to limited improvement (NDCG@100 = 7.42).

In contrast, \ourmethod generates a highly detailed and context-rich query that resembles human-level product search behavior. It incorporates specific configurations (e.g., storage size, “Blu-ray”, “used” vs. “new”), product features, and variant types, all while maintaining semantic coherence. This significantly improves retrieval accuracy, correctly retrieving most of the target items (NDCG@100 = 78.53). Items that hit the target list are marked with red boxes in Figure~\ref{fig:esci_case}, clearly demonstrating \ourmethod’s superiority in aligning generation with downstream relevance signals.

\subsection{Sequential Recommendation: Amazon Beauty} Figure~\ref{fig:amazon_review_case} shows a case from the Amazon Beauty dataset. The user previously purchased a set of products. The goal is to predict the next item the user is likely to purchase. GPT-4o generates a broad and somewhat vague Boolean query with high-level categories such as “makeup brushes” or “serum”, failing to capture the continuity in user intent. As a result, none of the retrieved items match the target (NDCG@10 = 0.00).

\ourmethod, on the other hand, generates a natural-sounding pseudo-review that maintains semantic consistency with the user’s history while narrowing in on likely next items—such as winged eyeliner and related cosmetic tools. This leads to successful retrieval of the ground-truth target item (NDCG@10 = 43.06), as highlighted in red. 

\end{document}